\documentclass[sigconf, preprint, nonacm]{acmart}
\setcopyright{none}
\usepackage{algorithm}
\usepackage{algorithmicx}
\usepackage{algpseudocode}
\usepackage{dblfloatfix}

\usepackage{amsthm}
\newtheorem{theorem}{Theorem}[section]

\newcommand\vldbavailabilityurl{URL_TO_YOUR_ARTIFACTS}
 
\begin{document}
\title{AutoCSF: Provably Space-Efficient Indexing of Skewed Key-Value Workloads via Filter-Augmented Compressed Static Functions}

\author{David Torres Ramos}
\affiliation{%
  \institution{Distill}
  \city{New York}
  \country{USA}
}
\email{detorresramos1@gmail.com}

\author{Vihan Lakshman}
\affiliation{%
  \institution{MIT CSAIL}
  \city{Cambridge}
  \country{USA}
}
\email{vihan@mit.edu}

\author{Chen Luo}
\affiliation{%
  \institution{Amazon}
  \city{Palo Alto}
  \country{USA}
}
\email{cheluo@amazon.com}

\author{Todd Treangen}
\affiliation{%
  \institution{Rice University}
  \city{Houston}
  \country{USA}
}
\email{treangen@rice.edu}

\author{Benjamin Coleman}
\affiliation{%
  \institution{Google DeepMind}
  \city{Mountain View}
  \country{USA}
}
\email{colemanben@google.com}

\begin{abstract}
We study the problem of building space-efficient, in-memory indexes for massive key-value datasets with highly skewed value distributions. This challenge arises in many data-intensive domains and is particularly acute in computational genomics, where $k$-mer count tables can contain billions of entries dominated by a single frequent value. While recent work has proposed to address this problem by augmenting \emph{compressed static functions} (CSFs) with pre-filters, existing approaches rely on complex heuristics and lack formal guarantees. In this paper, we introduce a principled algorithm, called AutoCSF, for combining CSFs with pre-filtering to provably handle skewed distributions with near-optimal space usage. We improve upon prior CSF pre-filtering constructions by (1) deriving a mathematically rigorous decision criterion for when filter augmentation is beneficial; (2) presenting a general algorithmic framework for integrating CSFs with modern set membership data structures beyond the classic Bloom filter; and (3) establishing theoretical guarantees on the overall space usage of the resulting indexes. Our open-source implementation of AutoCSF demonstrates space savings over baseline methods while maintaining low query latency.
\end{abstract}
\setcopyright{none}
\maketitle



\section{Introduction}


Key-value stores are indispensable components of modern data infrastructure. As data volumes continue to balloon in size, there is increasingly a need to design new data structures for key-value lookups with small memory footprints while still enabling fast queries. When the set of key-value pairs is fixed (or changing infrequently), \emph{succinct retrieval} data structures have emerged as a state-of-the-art class of techniques for indexing data with minimal space overhead. These retrieval structures offer the guarantee of returning the correct value when queried with any key in the index, but may return arbitrary outputs for any key not in the set. 

Amongst these succinct retrieval methods, \emph{compressed static functions} (CSFs) achieve space usage proportional to the empirical entropy of the value set. Thus, when the list of values in the index is highly repetitive, CSFs can achieve significantly improved compression rates compared to distribution-agnostic methods, typically within a constant factor of the information-theoretic lower bound.

However, one limitation of CSFs is that, by construction, the index must use a lower bound of at least 1 bit per key. Consequently, if the value distribution is highly skewed, namely when a single value is associated with the majority of the keys, CSFs fall short of achieving optimal compression rates. This barrier is especially prominent in computational genomics where long read $k$-mer count tables, which map keys of $k$-length nucleotide substrings to their frequency of occurrence in a given genome, often exhibit extreme skew with over 90\% of $k$-mers having a value of 1. 

This challenge was previously identified in the literature by \citet{shibuya2022space} who proposed a new index called a \textit{Bloom-enhanced Compressed Static Function} (BCSF) which augments a CSF with a Bloom filter to handle the dominating value. In particular, a BCSF index first inserts all keys not associated with the dominating value into a Bloom filter.  Additionally, since Bloom filters admit false positives, the BCSF index will insert the false positive keys associated with the dominating value into the CSF. At query time, given a key $k$, a BCSF first checks if $k$ is in the Bloom filter. If the filter returns false, then we can conclude that the associated value must be the dominating entity and simply return that value. If the filter returns true, we query the CSF and return the output. This construction guarantees correctness while eliminating the need to index the dominating value directly. Inspired in part by this work, \citet{hermann2025learned} also leverage this idea of filter augmentation to develop a \emph{learned} CSF index that allows for improved compression via machine learning.

Despite the effectiveness of filter-augmented CSFs, as demonstrated in these prior works in the literature, these data structure compositions introduce additional complexity. In particular, it is not completely clear which key-value distributions might benefit from filter-augmented CSFs or how to set the false positive rate of the filter to minimize the overall space usage. Unlike standalone approximate set membership data structures, which maintain a monotonic relationship between the false positive rate and the size of the resulting filter, the relationship between the false positive rate and the overall index size in filter-augmented CSFs is complex to model. As an illustrative example, increasing the false positive rate of a filter might allow more values to pass through to the CSF, causing the value frequency distribution to be more closely-aligned to powers of 2 and thus more highly compressible by Huffman codes. While the work of \cite{shibuya2022space} present a set of criteria for making these parameter choices in a BCSF, they are heuristic-driven as opposed to mathematically principled and, as we will demonstrate in this paper, can be misleading and result in suboptimal decisions. Motivated by these shortcomings, we ask the following research questions:

\begin{enumerate}
\item Can we develop a mathematically principled decision criterion to determine when a set of key-value pairs would benefit from filter-augmentation to save space versus simply using only a CSF? 

\item Given a filter-augmented CSF, can we design an efficient algorithm for determining the optimal parameters for the pre-filter that minimizes the total space usage of the index? 

\item Can we design a general framework for Questions (1) and (2) that can be applied to all set membership data structures in the literature, as opposed to just the classical Bloom filter? 
\end{enumerate}

In this paper, we affirmatively answer all three of these research questions through a \emph{single} algorithm, which we call AutoCSF. The key insight behind AutoCSF is that while modeling the exact memory cost of a filter-augmented CSF appears to be intractable -- or at least unwieldy to analyze -- we can derive analytically clean upper and lower bounds on the \emph{difference} in space usage between a CSF and filter-augmented CSF. In other words, instead of attempting to develop separate cost models for the space usage of a CSF and filter-augmented CSF, we bound the difference in bits per key between the two data structures \emph{without ever having to explicitly construct either index}. This observation drives the development of our AutoCSF algorithm, which first determines if filter-augmentation is beneficial and then, if a filter is helpful, provides near-optimal settings for the filter parameters. The AutoCSF framework can be applied to any approximate set membership data structure with a known relationship between the memory usage and false positive rate, allowing our results to easily extend to the large number of membership filters that have been developed over the last 50 years~\cite{bloom1970space,mitzenmacher2001compressed,almeida2007scalable,porat2009optimal,fan2014cuckoo,pellow2017improving,mitzenmacher2018model,vaidya2020partitioned,liu2020stable,graf2020xor,mitzenmacher2020adaptive,BuRR2022,sato2023fast}. We implement AutoCSF in practice and verify that our theoretical results align closely with empirical results (i.e. the bounds are tight), thus providing a principled algorithm to effectively compose CSFs with prefilters. 

In summary, we make the following contributions

\begin{itemize}
\item We demonstrate that the existing state-of-the-art heuristic-based decision criterion for filter-augmented CSFs (BCSF) can actively degrade performance, recommending filter configurations that increase space usage by up to 1 bit per key relative to a plain CSF (Figure \ref{fig:shibuya_bpk}).

\item We derive lower and upper bounds on the difference in space usage between a standard CSF and a filter-augmented CSF (Theorems \ref{thm:lower_bound} and \ref{thm:upper-bound}). The key insight is to bound the cost difference directly using Huffman code optimality, rather than attempting to model the cost of each index independently. The lower bound yields a provably safe decision criterion: it never recommends a filter that increases space.

\item We present a general algorithmic framework, AutoCSF, for determining \emph{when} to use a filter-augmented CSF and \emph{how} to set the parameters for the prefilter for space efficiency. AutoCSF applies to any approximate set membership data structure with a known cost-vs-false-positive-rate tradeoff, including XOR filters, binary fuse filters, and Bloom filters with arbitrary hash counts. This extends the filter-augmented CSF paradigm beyond the classical Bloom filter used in prior work. In addition, unlike prior approaches, AutoCSF does not require estimating the full value distribution or entropy. Instead, it relies only on the dominant value fraction and a filter cost model, making it both simpler and more robust in practice.

\item We empirically validate \footnote{Our experimetal code is available at \url{https://github.com/detorresramos/CaramelDB-Benchmarks}.} that our bounds are tight across three synthetic distributions and five filter types, that the theory-guided parameter selection matches or comes within one discrete step of the empirical optimum, and that AutoCSF achieves Pareto-optimal memory-latency tradeoffs on real genomics workloads with up to 70 million keys.
\end{itemize}


\section{Related Work}

Succinct retrieval data structures, including CSFs, have garnered intense research interest in the theoretical computer science community for decades. The original CSF constructions involved using a minimal-perfect hash function to index a set of constant-size values, later followed by algorithms that store Huffman codes of the values, and finally culminating in algorithms that \textit{compute} the Huffman codes of the values through the solution of a randomized binary linear system~\cite{majewski1996family,dietzfelbinger2008succinct,belazzougui2013compressed}.

This binary, XOR-based system undergoes a phase transition as we reduce the space, and there has been considerable theoretical work to migrate from the easier, larger-overhead regime~\cite{botelho2013practical} (where hypergraph peeling solves the system in linear time with high probability) to the minimum-space regime where the system is not easily solvable~\cite{dietzfelbinger2008succinct}.
This minimum-space regime was primarily considered a theoretical curiosity and not feasible in practice due to a construction time cubic in the number of keys. The work of \cite{genuzio2020fast} refuted this conventional wisdom by presenting a practical CSF construction algorithm that leverages both sophisticated algorithmic techniques, such as lazy Gaussian elimination and hypergraph peeling, as well as implementation-level optimizations like broadword programming. 

Following the pioneering work of \citet{genuzio2020fast}, which provided the first practically implementable CSF construction, several works have proposed building data systems with CSFs as the foundational indexing primitive \cite{coleman2023caramel, shibuya2022space, hermann2025learned}. The most relevant prior work to our contributions in this paper is that of \citet{shibuya2022space} who proposed the idea of Bloom-enhanced CSFs to handle skewed value distributions, motivated by the problem of indexing skewed $k$mer count datasets in computational genomics. Our contributions in this paper directly build off the ideas in \cite{shibuya2022space}. Specifically, as we discuss further in the next section, the heuristic-driven decision criteria used in \cite{shibuya2022space} to determine when to apply a Bloom filter and how to set the false positive rate can lead to suboptimal configurations in practice. We improve upon this construction with our AutoCSF algorithm. 

In addition to the work of \citet{shibuya2022space}, several other recent works have studied CSFs from a practical lens. CARAMEL \cite{coleman2023caramel} explores indexing a multi-set of values for each key via an array of CSFs. More recently, \citet{hermann2025learned} proposed a learned CSF (LCSF) that combines tools from the rich literature on learned index structures \cite{kraska2018case} as well as Bloom filter augmentations to reduce the memory footprint of CSFs even further, though at the cost of increased query latency and long model training times. In this paper, we include an experimental comparison of our proposed AutoCSF framework with LCSF on skewed distributions where we find that we can in fact achieve comparable space savings without the overhead of training a learned model. This result may be of independent interest in demonstrating that learned indexes, despite their power, may provide limited utility in settings where we can analytically model the cost of an index directly. 

\section{Warmup: The BCSF Algorithm}

In this section, we review the BCSF heuristic introduced by \cite{shibuya2022space} as the design of this algorithm will motivate our efforts to improve upon it in this paper. Let $K$ denote the set of all items in the dataset and let $K_0$ denote the set of items with the most common frequency. We have that $|K_0| = \alpha |K|$ for some $0 \le \alpha \le 1$. Our goal is to minimize space by storing the set of non-dominating values in a Bloom filter and the associated counts of these non-dominating values in a CSF (plus the false positives).

Assume that the Bloom filter takes $C_{BF} \log \frac{1}{\varepsilon}$ bits per key where $\varepsilon$ denotes the false positive rate. Also assume the CSF takes $C_{CSF}$ bits per key. The total space usage of this Bloom-enhanced CSF data structure is thus $$C_{BF} (1-\alpha) |K| \log \frac{1}{\varepsilon} + C_{CSF} |K| ((1-\alpha) + \varepsilon \alpha)$$ where we assume the logarithm is base 2. \\

To determine if we need a Bloom filter, BCSF needs to verify if the following inequality holds $$C_{BF} (1-\alpha) |K| \log \frac{1}{\varepsilon} + C_{CSF} |K| ((1-\alpha) + \varepsilon \alpha) < C_{CSF} |K|$$

We can rewrite this inequality as $$\frac{C_{BF}}{C_{CSF}} \left(\frac{1-\alpha}{\alpha}\right) \log \frac{1}{\varepsilon} + \varepsilon < 1$$

The left-hand side of the previous inequality reaches its minimum at $$\varepsilon^* = \frac{C_{BF}}{C_{CSF}} \left(\frac{1-\alpha}{\alpha}\right) \log e$$

This gives us our decision criteria. We conclude that the Bloom filter helps reduce space if $\varepsilon^* < 1$ and $$\alpha > \frac{C_{BF} \log e}{C_{CSF} + C_{BF}\log e}$$

If $\alpha$ is sufficiently large, we can then construct a Bloom filter with false positive rate $\varepsilon^*$. 

The question remains: how do we compute $C_{BF}$ and $C_{CSF}$, the bits-per-key cost of the Bloom filter and the CSF? The BCSF algorithm estimates $C_{BF}$ by the theoretical space coefficient of idealized Bloom filters of 1.44. However, this idealized model is not realistic because it assumes the number of hash functions used by filter is continuous as opposed to a discrete parameter, which is the case in practice. In this paper, we explore whether we can do better by having a more accurate Bloom filter cost model. Secondly, the BCSF algorithm, as described in \cite{shibuya2022space}, estimates $C_{CSF}$ in terms of the empirical entropy $H_0$, producing the piecewise equation 

\[
C_{CSF} = 
\begin{cases} 
0.22H_0^2 + 0.18H_0 + 1.16, \quad \text{if } H_0 < 2 \\

1.1H_0 + 0.2 \quad \text{otherwise}
\end{cases}
\]

However, this CSF cost model is a heuristic and not theoretically principled. In Figure \ref{fig:shibuya_bpk} we demonstrate that this heuristic can provide misleading guidance in practice, which motivates our work to design a more principled algorithm with provable guarantees with AutoCSF. The key challenge in designing a principled algorithm is that the cost of the CSF space usage, particularly when coupled with random false positives from the filter, is difficult to model analytically, which is why a data-driven heuristic as in BCSF is a natural choice. However, as we demonstrate in the next section, we can circumvent this difficulty by bounding the \emph{difference} in cost between a CSF and filter-enhanced CSF as opposed to modeling the cost of each index individually. This insight allows us to derive tight, usable bounds on the bit per key. 

\section{The AutoCSF Algorithm}

We describe AutoCSF, our new algorithm for filter-enhanced CSFs in Algorithm \ref{alg:autocsf}. The algorithm is responsible for making two key design decisions: (1) first determine if a given set of key-value pairs would achieve more compression with filter-augmentation as opposed to just a standard CSF; and (2) if filter-augmentation helps, set the filter parameters near-optimally. Algorithm \ref{alg:autocsf} is parameterized by a filter bits-per-key cost function $b(\varepsilon)$ and can thus be applied to any approximate set membership data structure, including recent powerful techniques like XOR filters \cite{graf2020xor} and binary fuse filters \cite{graf2022binary} that provide a closed-form cost function $b(\varepsilon)$. The crucial step in Algorithm \ref{alg:autocsf} is to optimize a lower bound on the difference in cost between a CSF and filter-enhanced CSF (which we denote as FSF). 

By maximizing this lower bound in terms of the filter false positive rate $\varepsilon$, Algorithm \ref{alg:autocsf} aims to find a parameter setting that maximizes the space savings of using a filter-augmented CSF over a regular CSF. If these savings, even after maximization, are negative, then we conclude that a filter does not help. Otherwise, we build a filter-augmented CSF using the optimal false positive rate $\epsilon^*$. In the next subsection, we formally prove this lower bound on the bits per key difference in cost between the respective indexes. Then, in the following subsection, we prove an analogous upper bound. Although we do not use the upper bound directly in the AutoCSF algorithm, it is nevertheless a useful theoretical result in establishing the first such bound for this problem. In addition, the upper bound can be used for predicting the maximum possible space savings from a filter-augmented CSF, which can be used for making cost-effective decisions on infrastructure to serve these indexes. 

\begin{algorithm}
\caption{AutoCSF}
\begin{algorithmic}[1]
\Require A set of key--value pairs $\{(k_i, v_i)\}_{i=1}^N$ with $n$ unique values, most frequent token ratio $\alpha$, and a filter cost function $b(\varepsilon)$
\Ensure A chosen data structure (CSF or CSF+Filter) built on the full dataset

\State Compute the optimal $\varepsilon$ by maximizing the equation
\[
\varepsilon^* = \arg\max_{\varepsilon}
\bigl(\alpha \delta (1 - \varepsilon) - (1 - \alpha)b(\varepsilon)\bigr)
\]

\If{$\alpha \delta (1 - \varepsilon^*) - (1 - \alpha)b(\varepsilon^*) - n / N > 0$}
    \State Build a CSF+Filter using the optimal false positive rate $\varepsilon^\star$
\Else
    \State Build a CSF without a filter
\EndIf

\end{algorithmic}
\label{alg:autocsf}
\end{algorithm} 

\subsection{Proving the Lower Bound}

We will now proceed with proving our lower bound on filter-augmented CSF space savings. Our proof of this bound critically leverages the optimality of Huffman codes. 

\begin{theorem}[\citet{huffman1952method}]
\label{thm:HuffmanOptimality}
Given frequencies $F = \{f_1, f_2, ... f_{z}\}$ of values $V = \{v_1, v_2, ... v_{z}\}$ and a Huffman code $H(F)$ with lengths $\{l_1, l_2, ... l_{z}\}$,
$$ \sum_{i = 1}^{z} f_i l_i \leq \sum_{i = 1}^{z} f_i l'_i$$
for any other uniquely decodable code with lengths $l'_i$.
\end{theorem}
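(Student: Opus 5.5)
The theorem is Huffman's classical optimality result, cited rather than proved in the paper. My plan is to reproduce the standard argument: first reduce from uniquely decodable codes to prefix codes, then prove optimality among prefix codes by induction on the alphabet size $z$.

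\textbf{Step 1 (reduction to prefix codes).} By the McMillan inequality, any uniquely decodable code has lengths satisfying $\sum_i 2^{-l'_i}\le 1$. By Kraft's inequality, any lengths satisfying this bound are realized by some prefix code. So it suffices to show that $\sum_i f_i l_i\le \sum_i f_i l'_i$ for every prefix code. For McMillan I would use the usual counting trick: expand $\bigl(\sum_i 2^{-l'_i}\bigr)^m$, observe that unique decodability gives at most $2^\ell$ concatenations of $m$ codewords with total length $\ell$, and conclude that the sum is at most $m\,l_{\max}$ for every $m$. Since this bound grows only linearly in $m$, the base $\sum_i 2^{-l'_i}$ cannot exceed $1$.

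\textbf{Step 2 (structural lemma).} Order the frequencies so that $f_1\ge\dots\ge f_z$. I will show there is an optimal prefix code, viewed as a binary tree, with two properties:
\begin{itemize}
\item the lengths are monotone, $l_1\le\dots\le l_z$, which follows by exchanging codewords;
\item the two least frequent symbols $v_{z-1},v_z$ are siblings at maximal depth.
\end{itemize}
For the second property, note that in an optimal tree every internal node has two children; otherwise we could contract a node and shorten a codeword. Hence the deepest leaf has a sibling leaf. Swapping $v_{z-1}$ and $v_z$ into those two positions does not increase cost, because lower frequencies move to greater depths.

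\textbf{Step 3 (induction).} Huffman's algorithm merges $v_{z-1},v_z$ into a single symbol of frequency $f_{z-1}+f_z$ and recurses. Let $L(F)$ denote the cost of the Huffman code $H(F)$ and $F'$ the merged frequency list. Splitting the merged leaf gives $L(F)=L(F')+f_{z-1}+f_z$. Conversely, take the optimal code from Step 2. Merging its sibling pair yields a prefix code for $F'$ whose cost is exactly $f_{z-1}+f_z$ less, so $\mathrm{OPT}(F)=\mathrm{cost}'+f_{z-1}+f_z$. By the induction hypothesis $L(F')\le \mathrm{cost}'$, and therefore $L(F)\le \mathrm{OPT}(F)$. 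The base case $z\le 2$ is trivial.

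The main obstacle is Step 2, the exchange argument. It has to be carried out carefully: ties in frequency, and showing that the deepest level contains at least two leaves that can be made siblings. Ties are handled by arbitrary tie-breaking, since the swaps are cost-nonincreasing.
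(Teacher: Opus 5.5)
The paper does not prove this statement: it is quoted from Huffman (1952) and used as a black box, so there is no in-paper argument to compare against. Your proposal is the standard textbook proof, and it is correct. It uses McMillan plus Kraft to pass from uniquely decodable codes to prefix codes, then the sibling-property exchange lemma, then induction on $z$ via the merge step.

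A few points are worth tightening.
\begin{itemize}
\item You speak of ``an optimal prefix code'' without justifying that one exists, and there are infinitely many prefix codes. You could note that contraction reduces any code to a full binary tree with $z$ leaves, of which there are finitely many. More cleanly, run the contraction and swap steps on an \emph{arbitrary} prefix code $C$ and conclude $L(F)\le\mathrm{cost}(C)$ directly; then existence never arises.
\item Contraction strictly lowers the cost only when some affected leaf has positive weight. In the paper's application $f'_0$ can be $0$ (no false positives). So say ``there exists a full optimal tree'' rather than ``every optimal tree is full''; the latter is false with zero weights.
\item The monotone-lengths bullet in Step 2 is never used in Step 3.
\end{itemize}

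The paper uses the theorem to compare the new Huffman code $e'$ against the old code $e$ on the frequencies $\{f'_0,f_1,\dots,f_n\}$. There the competitor $e$ is itself a prefix code, so your Step 1 is needed only for the theorem's general ``uniquely decodable'' phrasing.
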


Using this foundational result on the optimality of Huffman codes, we now state and prove our new lower bound theorem, which we crucially leverage in the AutoCSF algorithm. 

\begin{theorem}
\label{thm:lower_bound}
Let $C_{\mathrm{FSF}}$ and $C_{CSF}$ denote the memory costs of filtered and non-filtered CSFs, where the filter is a set membership oracle with false positive rate $\epsilon$ that requires $b(\epsilon)$ bits per key.
$$ \left(C_{\mathrm{CSF}} - \mathbb{E}[C_{\mathrm{FSF}}]\right) / N \geq \alpha \delta (1 - \epsilon) - (1 - \alpha)b(\epsilon) - n / N$$
Note that, for practical distributions, the vocabulary size $n$ is typically $o(N)$, making $n / N = o(1)$.
\end{theorem}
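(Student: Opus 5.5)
The plan is to avoid modeling either index exactly and instead compare the Huffman-coded payloads of the two CSFs by exhibiting a \emph{feasible} code for the filtered CSF, then invoking Theorem~\ref{thm:HuffmanOptimality}. I will use the standard CSF cost model: a CSF over a multiset of values with frequencies $f_i$ stores, per key, the Huffman codeword of its value in the linear system, plus a codebook/auxiliary term that depends only on the value vocabulary. This gives $C_{\mathrm{CSF}} = \sum_{i} f_i l_i + O(\text{vocabulary})$, where the constant factor of the retrieval structure is absorbed in the per-bit cost. I interpret $\delta$ as the codeword length $l_0$ assigned to the dominant value $v_0$ by the Huffman code of the full distribution; since every codeword has length at least $1$, this is at least one bit per key.

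\textbf{Step 1 (unfiltered cost).} Write $C_{\mathrm{CSF}} = \alpha N \delta + \sum_{i\neq 0} f_i l_i + R$, where $R$ is the vocabulary-dependent overhead.

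\textbf{Step 2 (filtered cost under a fixed code).} The filter stores the $(1-\alpha)N$ non-dominant keys, costing $(1-\alpha)N b(\epsilon)$ bits. The inner CSF then holds:
\begin{itemize}
\item all non-dominant keys, with frequencies $f_i$ for $i\neq 0$;
\item a random number $X$ of dominant-valued false positives, with $\mathbb{E}[X] = \epsilon\alpha N$.
\end{itemize}
Its actual Huffman code is built on these frequencies. The original code lengths $\{l_i\}$ are still prefix-free, hence uniquely decodable, for this sub-vocabulary. So Theorem~\ref{thm:HuffmanOptimality} gives, pointwise for every realization of the false positives,
\[ \text{payload}_{\mathrm{FSF}} \le X\delta + \sum_{i\neq 0} f_i l_i . \]
The crucial point is that this upper bound is \emph{linear} in $X$. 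Taking expectations is therefore trivial, even though the true Huffman cost is a nonlinear and messy function of $X$. This is exactly why the argument bounds the difference rather than each cost separately.

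\textbf{Step 3 (subtract).} Subtracting, and using $\mathbb{E}[X]=\epsilon\alpha N$, yields
\[ C_{\mathrm{CSF}} - \mathbb{E}[C_{\mathrm{FSF}}] \ge \alpha N \delta (1-\epsilon) - (1-\alpha)N b(\epsilon) - (\mathbb{E}[R'] - R), \]
where $R'$ is the overhead of the filtered CSF. Dividing by $N$ gives the claim, provided $\mathbb{E}[R'] - R \le n$.

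The main obstacle is this last overhead term. The codebook of the filtered CSF ranges over at most the same $n$ values as the original, and may include $v_0$ only if false positives exist. However, the two codebooks differ in their lengths, so I cannot simply cancel them. My first attempt will be to bound the codebook of any Huffman code on at most $n$ symbols crudely by $O(n)$ bits, for instance by storing code lengths canonically. I would then drop $R \ge 0$, absorbing the whole of $\mathbb{E}[R']$ into the $n/N$ term with the constant normalized per the paper's cost convention. If the per-symbol codebook cost is larger than one bit in that convention, the statement's $n/N$ should be read up to that constant, which is harmless since the paper notes $n = o(N)$.
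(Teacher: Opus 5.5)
Your central step is exactly the paper's argument: reuse the original Huffman lengths as a feasible prefix-free code for the filtered frequencies, so the filtered payload is bounded by a quantity linear in the number $f_0'$ of false positives, and then use $\mathbb{E}[f_0']=\epsilon f_0$.

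One correction first: $\delta$ is not $l_0$. In the paper it is the space-overhead constant of the linear-system retrieval structure ($\delta_3\approx 1.089$ with three hashes, $\delta_4\approx 1.024$ with four). So the payloads are $\delta\sum_i f_il_i$ and $\delta\bigl(f_0'l_0'+\sum_{i\ge 1}f_il_i'\bigr)$. The filter cost $b(\epsilon)(N-f_0)$ carries no such factor, which is why $\delta$ cannot simply be absorbed into the unit. The repair is easy. Multiply your optimality inequality by $\delta>0$ to get a payload saving of at least $\delta l_0(f_0-f_0')$, take expectations, and use $l_0\ge 1$ (which you already observed) to reach $\alpha\delta(1-\epsilon)$. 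The paper's derivation in fact ends with $\delta l_0\alpha(1-\epsilon)$.

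The genuine gap is the overhead term. Your plan delivers only $-c\,n/N$ for an unspecified $c$, and you propose reinterpreting the statement to match. Dropping $R\ge 0$ and bounding $R'$ on its own throws away precisely the cancellation that is needed:
\begin{itemize}
\item $R'$ contains the storage $D_V$ of the distinct values themselves, which can be many bits per value.
\item Even the canonical length table alone costs $n\lceil\log_2 l'_{\max}\rceil$ bits, which already exceeds $n$ once $l'_{\max}>2$.
\end{itemize}
Since AutoCSF's decision rule subtracts exactly $n/N$, this constant is not cosmetic.

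The missing step is to split $R=D_V+D_e$ and $R'=D_V+D_{e'}$ and cancel the common vocabulary cost $D_V$. Then bound the \emph{difference} $D_{e'}-D_e=n\bigl(\lceil\log_2 l'_{\max}\rceil-\lceil\log_2 l_{\max}\rceil\bigr)$. The paper does this by arguing that lowering $f_0$ to $f_0'$ increases the maximum code length by at most one, $l'_{\max}\le l_{\max}+1$. Then $\lceil\log_2(l_{\max}+1)\rceil-\lceil\log_2 l_{\max}\rceil\le 1$ gives $D_{e'}-D_e\le n$ pointwise, hence also in expectation.
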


\begin{proof}

We suppose that there are $n$ unique values, sorted in order of decreasing frequency $V = \{v_0, ... v_{n}\}$. We use $e$ to denote the Huffman encoding, $l_i$ for the length of the encoding $e(v_i)$ and $f_i$ for the frequency of $v_i$.

\textbf{Cost of the CSF:} The cost to store a compressed static function of $n = |V|$ distinct values is:

$$ C_{\mathrm{CSF}} = \delta \sum_{i = 0}^{n} f_i l_i + D_V + D_e$$

where $\delta$ is a constant that depends on whether $3$ or $4$ hash functions are used to construct the linear system. With $3$ hashes, $\delta_3 \approx 1.089$ and with $4$ hashes, $\delta_4 \approx 1.024$. The canonical Huffman dictionary requires $D_V$ space to store the unique values and $D_e = n\lceil \log_2 \max_i l_i\rceil $ bits to store the encoding lengths for the set of values.

\textbf{Cost of the filtered CSF:} When we prefilter the CSF, we remove (some of) the majority keys from the CSF and change the frequency distribution for the Huffman encoding. Specifically, we reduce the frequency of the majority value from $f_0$ to $f'_0$, causing the Huffman encoding to change from $e$ to $e'$.
Removal of the majority value can potentially update all of the Huffman code lengths, causing them to change from $\{l_0, ... l_{n}\}$ to $\{l'_0, ... l'_{n}\}$. The cost $D_V$ to store the vocabulary remains the same.


The cost of the prefiltered CSF also includes the cost to store the filter, which is $b(\epsilon)$ multiplied by the number of keys in the filter.

$$ C_{\mathrm{FSF}} = \delta f'_0 l'_0 + \delta \sum_{i = 1}^{n} f_i l'_i + D_V + D_{e'} + b(\epsilon) (N - f_0)$$

Here, $f'_0$ denotes the number of keys with value $v_0$ that erroneously pass through the filter. We insert $(N - f_0)$ minority keys into the filter at a cost of $b(\epsilon)$ bits per key, so the filter overhead is $b(\epsilon) (N - f_0)$.

\textbf{Bounding the expected cost:} We wish to find a lower bound on the performance improvement offered by the filter. Note that because $C_{\mathrm{FSF}}$ measures the storage cost (entropy), we want $C_{\mathrm{FSF}} < C_{\mathrm{CSF}}$. We will analyze the following expression.
\begin{align*}
C_{\mathrm{CSF}} - C_{\mathrm{FSF}} &= \delta \sum_{i = 0}^{n} f_i l_i + D_V + D_e \\&- \left(\delta f'_0 l'_0 + \delta \sum_{i = 1}^{n} f_i l'_i + D_V + D_{e'} + b(\epsilon) (N - f_0)\right)
\end{align*}

We may reorganize this expression into the following form.
\begin{align*}
C_{\mathrm{CSF}} - C_{\mathrm{FSF}} &= \delta \left(f_0 l_0 + \sum_{i = 1}^n f_i l_i - f'_0 l'_0 - \sum_{i=1}^n f_i l'_i\right) \\
& + D_e - D_{e'} \\
& + D_V - D_V \\ 
& - b(\epsilon) (N - f_0) 
\end{align*}

We will bound each of these terms individually.

\textbf{Bounding the Huffman code entropy:} To bound the terms involving $f_i l_i$ and $f_i l'_i$, we use the optimality of Huffman codes. Specifically, we have that
$$ f'_0 l'_0 + \sum_{i = 1}^{n} f_i l'_i \leq f'_0 l_0 + \sum_{i=1}^{n} f_i l_i$$
because the code $e'$ designed for $\{f'_0, f_1, f_2, ... f_{n}\}$ will have smaller average space than the code $e$ designed for $F = \{f_0, f_1, f_2, ... f_{n}\}$ (by Theorem~\ref{thm:HuffmanOptimality}). Therefore,

\begin{align*}
\delta \left(f_0 l_0 + \sum_{i = 1}^n f_i l_i - f'_0 l'_0 - \sum_{i=1}^n f_i l'_i\right)&\\
 \geq \left(f_0 l_0 + \sum_{i = 1}^n f_i l_i - f'_0 l_0 - \sum_{i=1}^n f_i l_i\right)& = l_0 (f_0 - f'_0)
\end{align*}

\textbf{Bounding the Huffman codebook size:} To bound the size of the difference $D_{e'} - D_{e}$, we rely on some basic properties of canonical Huffman codes.
Canonical Huffman codes only need to store the lengths of each code in $e(V)$ and $e'(V)$, not the codes themselves. Let $l_{\max} = \max_i l_i$ and $l'_{\max} = \max_i l'_i$. 

Then
$$D_e = n\lceil \log_2 l_{\max} \rceil \qquad D_{e'} = n\lceil \log_2 l'_{\max}\rceil$$

By replacing $f_0$ with $f'_0$, we increase the depth of the Huffman tree for $e'$ by at most 1. Therefore, $l'_{\max} \leq l_{\max} + 1$. This implies the following bound on the canonical Huffman codebooks.

$$ D_{e'} - D_{e} \leq n \lceil \log_2 (l_{\max}+1) \rceil - z\lceil \log_2 l_{\max} \rceil \leq n$$

\textbf{Taking the expectation:} By combining the previous two parts of the proof, we have the bound 

\begin{align*}
C_{\mathrm{CSF}} - C_{\mathrm{FSF}} &\geq \delta l_0 (f_0 - f'_0)  -n  - b(\epsilon) (N - f_0) 
\end{align*}

The headroom expression $C_{\mathrm{CSF}} - C_{\mathrm{FSF}}$ contains $f'_0$, the number of keys with value $v_0$ that erroneously pass through the filter. It should be noted that $f'_0$ is a random variable, where the randomness comes from the hash functions and other randomization techniques used in the approximate set membership algorithm. 

Because the set membership algorithm returns a false positive with probability $\epsilon$, the expected value of $f'_0$ is $\mathbb{E}[f'_0] = \epsilon f_0$. Thus we have

\begin{align*}
C_{\mathrm{CSF}} - \mathbb{E}[C_{\mathrm{FSF}}] &\geq \delta l_0 f_0 (1 - \epsilon)
-n -b(\epsilon) (N - f_0) 
\end{align*}

To complete the proof, we divide both sides of the equation by $N$, the total number of keys in the key-value store.
\begin{align*}
\frac{1}{N}\left(C_{\mathrm{CSF}} - \mathbb{E}[C_{\mathrm{FSF}}]\right) \geq \delta l_0 \alpha (1 - \epsilon) 
-n / N 
-b(\epsilon) (1 - \alpha) 
\end{align*}

\end{proof}

\subsection{Proving the Upper Bound}
Now we develop an upper bound on the memory cost of the filtered CSF relative to the regular CSF. In order to do this, we will leverage an additional result from information theory. The following theorem is a well-known consequence of the Kraft-McMillan inequality \cite{kraft1949device, mcmillan1956two} and the Shannon noiseless coding theorem \cite{shannon1948mathematical, cover1999elements}.

\begin{theorem}\label{thm:HuffmanEntropy} Given frequencies $F = \{f_1, f_2, ... f_z\}$ of values $V = \{v_1, v_2, ... v_z\}$ and a Huffman code $H(F)$ with lengths $\{l_1, l_2, ... l_z\}$

$$ -\sum_{i = 0}^{z}\frac{f_i}{N} \log_2 \frac{f_i}{N} \leq \sum_{i = 0}^{z} \frac{f_i}{N} l_i \leq 1 - \sum_{i = 0}^{z}\frac{f_i}{N} \log_2 \frac{f_i}{N} $$

where $N = \sum_{i = 1}^{z} f_i$.
\end{theorem}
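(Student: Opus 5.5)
The statement to prove is Theorem~\ref{thm:HuffmanEntropy}: for the empirical distribution $p_i = f_i/N$, the Huffman code lengths satisfy $H(p) \le \sum_i p_i l_i \le H(p) + 1$, where $H(p) = -\sum_i p_i \log_2 p_i$. I would prove the two inequalities separately. The lower bound comes from Kraft--McMillan together with Gibbs' inequality. The upper bound comes from building an explicit Shannon code and then invoking Huffman optimality (Theorem~\ref{thm:HuffmanOptimality}).

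\textbf{Lower bound.} The Huffman code is a prefix code, so its lengths satisfy Kraft's inequality, $\sum_i 2^{-l_i} \le 1$. Set $q_i = 2^{-l_i}/S$ with $S = \sum_i 2^{-l_i} \le 1$. Then
\begin{align*}
\sum_i p_i l_i - H(p) = \sum_i p_i \log_2 \frac{p_i}{2^{-l_i}} = D(p\,\|\,q) - \log_2 S .
\end{align*}
The relative entropy $D(p\,\|\,q)$ is nonnegative by Jensen's inequality applied to the concave $\log_2$, and $-\log_2 S \ge 0$ because $S \le 1$. So the right-hand side is at least $0$, which gives the left inequality. Values with $f_i = 0$ contribute nothing to either side and can be dropped.

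\textbf{Upper bound.} Define Shannon lengths $l'_i = \lceil \log_2 (1/p_i) \rceil$ for every $i$ with $p_i > 0$. These satisfy $\sum_i 2^{-l'_i} \le \sum_i p_i = 1$. By the converse direction of Kraft's inequality, there is a prefix code (hence a uniquely decodable code) with exactly these lengths; one can assign codewords greedily in order of increasing length. Theorem~\ref{thm:HuffmanOptimality} then gives
\begin{align*}
\sum_i p_i l_i \le \sum_i p_i l'_i < \sum_i p_i \left(\log_2 \frac{1}{p_i} + 1\right) = H(p) + 1 .
\end{align*}
To apply that theorem I multiply through by $N$, since it is stated in terms of the counts $f_i$ rather than the probabilities $p_i$.

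\textbf{Main obstacle.} The one step needing care is the converse of Kraft's inequality, i.e.\ that lengths obeying Kraft's inequality can be realized by an actual prefix code. I would cite it as the standard Kraft--McMillan result, consistent with the paper's attribution. If a self-contained argument is wanted, I would sort the lengths in increasing order and take as the $i$-th codeword the first $l'_i$ bits of the binary expansion of $\sum_{j<i} 2^{-l'_j}$. Two codewords cannot be prefixes of one another, because these cumulative sums are at least $2^{-l'_i}$ apart.

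A minor bookkeeping point is the indexing mismatch in the statement (sums from $0$ versus from $1$). I would treat all sums as running over the $z$ values actually present.
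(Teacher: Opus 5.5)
Your proof is correct and is the standard textbook argument (Kraft plus Gibbs' inequality for the lower bound, an explicit Shannon code plus Huffman optimality for the upper bound). The paper gives no proof of its own and simply attributes the statement to Kraft--McMillan and Shannon's noiseless coding theorem, so you are reproducing exactly what it cites; the only case to patch is a single value with $p_1 = 1$, where the Shannon length $\lceil \log_2(1/p_1) \rceil = 0$ is not a genuine uniquely decodable code, and taking length $1$ there makes your last inequality non-strict but still gives the upper bound.
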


Now, we state and prove our new upper bound result. 

\begin{theorem}
\label{thm:upper-bound}
Let $C_{\mathrm{FSF}}$ and $C_{CSF}$ denote the memory costs of filtered and non-filtered CSFs, where the filter is a set membership oracle with false positive rate $\epsilon$ that requires $b(\epsilon)$ bits per key.
$$\mathbb{E}[C_{\mathrm{CSF}}- C_{\mathrm{FSF}}] / N  \leq 2\delta - \frac{1}{2}\alpha  \delta  \epsilon + n/N 
- b(\epsilon) (N - f_0)$$
\end{theorem}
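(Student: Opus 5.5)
I will prove this by mirroring the decomposition used for Theorem~\ref{thm:lower_bound}, but bounding each term from the opposite side. I start from the same expression
\begin{align*}
C_{\mathrm{CSF}} - C_{\mathrm{FSF}} &= \delta\Bigl(\sum_{i=0}^{n} f_i l_i - f'_0 l'_0 - \sum_{i=1}^{n} f_i l'_i\Bigr) + (D_e - D_{e'}) - b(\epsilon)(N-f_0),
\end{align*}
in which the vocabulary costs $D_V$ cancel. The filter term is carried through unchanged, which explains why it appears verbatim on the right-hand side of the statement. For the codebook term I will bound $D_e - D_{e'}$ from above by $n$. Removing majority keys can only shrink the Huffman tree's dependence on $v_0$, and in any case $D_e \le n\lceil\log_2 l_{\max}\rceil$ with $l_{\max}\le n$. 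The cleanest route is the symmetric argument to the lower bound: the depths of the two trees differ by at most one level on the relevant side, so the ceiling of the logarithm changes by at most one per value, giving at most $n$ bits. This produces the $+n/N$ after normalization.

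The main work is the Huffman-length difference. Optimality (Theorem~\ref{thm:HuffmanOptimality}) points the wrong way here, so I will use Theorem~\ref{thm:HuffmanEntropy} on both codes. For the unfiltered code I apply the upper bound $\sum_i f_i l_i \le N(1 + H(F))$. For the filtered code, with total mass $N' = N - f_0 + f'_0$, I apply the lower bound $f'_0 l'_0 + \sum_{i\ge1} f_i l'_i \ge N' H(F')$. The difference is then at most $N + N H(F) - N' H(F')$. The entropy difference is handled by writing both entropies in unnormalized form, $N H(F) = N\log N - \sum_i f_i\log f_i$, and noting that the minority terms $f_i\log f_i$ for $i\ge1$ cancel exactly. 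What remains involves only $f_0$, $f'_0$, $N$ and $N'$:
\[
N\log N - N'\log N' - f_0\log f_0 + f'_0\log f'_0 .
\]
I then want to show this is at most $N - \tfrac12 f_0\epsilon$ in expectation, up to the normalization. My first attempt will use concavity of $x\log x$ (a mean-value or tangent-line bound on $g(x)=(x+c)\log(x+c) - x\log x$ with $c = N - f_0$) to control the change from $f_0$ to $f'_0$ by roughly $(f_0 - f'_0)\log(N/f_0)$ plus a bounded term. Combined with the extra $+N$ from the Huffman redundancy, this should give the $2\delta N$ after multiplying by $\delta$.

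Finally I take expectations. The only randomness is $f'_0$, with $\mathbb{E}[f'_0] = \epsilon f_0$, so the linear terms become $\epsilon \alpha N$. The $-\tfrac12\alpha\delta\epsilon$ term should come from lower-bounding $\mathbb{E}[f'_0 l'_0]$ (equivalently $\mathbb{E}[f'_0\log f'_0]$-type contributions), using that $l'_0 \ge 1$ whenever at least two symbols remain. The factor $\tfrac12$ gives slack for Jensen's inequality on the convex function $x\log x$, applied in the favorable direction. Dividing by $N$ completes the bound. I expect the entropy-difference step to be the main obstacle, i.e.\ showing that the $f_0$-dependent logarithmic terms collapse to at most a constant per key. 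If the concavity argument does not close cleanly, I will fall back on the crude bounds $H(F)\le 1 + \log$-free estimates in the skewed regime, $\alpha\ge\tfrac12$, where $l_0 = 1$. Then $\sum f_i l_i \le N(1+H)$ and $H(F')\ge 0$ suffice to reach $2\delta$, and the $\epsilon$ term comes solely from $\mathbb{E}[f'_0]\,l'_0\ge \tfrac12\epsilon f_0$.
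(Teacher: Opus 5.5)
Up to the cancellation of the minority terms you follow the paper's proof exactly. You arrive at $Z\le N+g(f_0)-g(f_0')$ with $g(x)=(x+c)\log_2(x+c)-x\log_2 x$ and $c=N-f_0$. The gap is the step you flag yourself: you never establish $\mathbb{E}[g(f_0)-g(f_0')]\le N-\tfrac12\epsilon f_0$, and the tools you name point the wrong way.

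Because $g$ is concave, a tangent-line or mean-value estimate at $f_0$ gives $g(f_0)-g(f_0')\ge (f_0-f_0')\log_2(N/f_0)$, which is a lower bound. An upper bound of that shape does hold, but it comes from the other end, $g(f_0')-g(0)\ge f_0'\log_2(N/f_0)$. Likewise $f_0'\log_2 f_0'-N'\log_2 N'=-g(f_0')$ is convex in $f_0'$, so Jensen yields $\mathbb{E}[-g(f_0')]\ge -g(\epsilon f_0)$, again the wrong direction. When all you know is $\mathbb{E}[f_0']=\epsilon f_0$, there is no favorable Jensen step.

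Extracting the $\epsilon$-term from $l_0'\ge 1$ does not fit this chain either, since $l_0'$ disappeared when you replaced the filtered cost by $N'H(F')$. The fallback is false as written. $H(F)\le 1$ fails badly: for unique minority values $H(F)=H_b(\alpha)+(1-\alpha)\log_2((1-\alpha)N)$, with $H_b$ the binary entropy. Discarding $N'H(F')$ also throws away precisely the minority cancellation that makes the bound $O(1)$ per key. The fallback can be rescued for $\alpha\ge\tfrac12$, but combinatorially rather than via entropy. There $l_0=1$ and $l_i=1+m_i$ with $m$ an optimal code for the minority alone. Deleting the leaf $v_0$ from the filtered tree shows $\sum_{i\ge1}f_il_i'\ge\sum_{i\ge1}f_im_i$, so $Z\le N-f_0'l_0'\le N-f_0'$.

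The missing idea is to bound $g(f_0)-g(f_0')$ pointwise by an affine function of $f_0'$, and only then take expectations. The paper does this in three moves:
\begin{itemize}
\item it isolates $f_0'\log_2(f_0'/N')$ and applies $\ln(1+z)\ge 2z/(2+z)$ to get $f_0'\log_2(f_0'/N')\le -2f_0'(N-f_0)/(3N-f_0)$;
\item it bounds $-(N-f_0)\log_2(N'/N)$ by its value at $f_0'=0$;
\item together these give $Z\le N+NH_b(\alpha)-2f_0'\frac{1-\alpha}{3-\alpha}$.
\end{itemize}
A shorter route is the chord of the concave $g$ on $[0,f_0]$: $g(f_0')\ge g(0)+\frac{f_0'}{f_0}\bigl(g(f_0)-g(0)\bigr)$, with $g(f_0)-g(0)=NH_b(\alpha)$. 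Hence $\mathbb{E}[Z]\le N\bigl(1+(1-\epsilon)H_b(\alpha)\bigr)\le N(2-\epsilon)\le N(2-\tfrac12\alpha\epsilon)$, which is your target.

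This choice matters. The paper's coefficient $2(1-\alpha)/(3-\alpha)$ is at least $\tfrac12$ only when $\alpha\le\tfrac13$. So its final ``simplification'' to $\tfrac12\alpha\delta\epsilon$ is not justified in the skewed regime, whereas the chord bound gives the stated constant for every $\alpha$.

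Finally, after dividing by $N$ the filter term is $(1-\alpha)b(\epsilon)$, not $b(\epsilon)(N-f_0)$ ``verbatim''. The $(N-f_0)$ in the statement is a normalization slip, and the paper's own last line has $1-\alpha$.
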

\begin{proof}
We follow the same development as the previous theorem, up to the point where we obtain the following cost expression:

\begin{align*}
C_{\mathrm{CSF}} - C_{\mathrm{FSF}} &= \delta \left(f_0 l_0 + \sum_{i = 1}^n f_i l_i - f'_0 l'_0 - \sum_{i=1}^n f_i l'_i\right) \\
& + D_e - D_{e'} \\
& + D_V - D_V \\ 
& - b(\epsilon) (N - f_0) 
\end{align*}

Rather than find lower bounds for each term as before, now we will find upper bounds. By replacing $f_0$ with $f_0'$, we may decrease the depth of the Huffman tree by at most 1, yielding $D_e - D_{e'} \leq n$. As before, we leave the $-b(\epsilon)(N - f_0)$ term alone, and turn our attention to the more difficult term involving the code lengths.

The main idea behind this proof is to use both the upper and lower bounds from Theorem~\ref{thm:HuffmanEntropy} to get an upper bound on the difference. Therefore, we use the upper entropy bound for the $f_i l_i$ terms and the lower entropy bound for the $-f_i' l_i'$ terms.

$$- \sum_{i=0}^{n}\frac{f_i}{N} \log_2\frac{f_i}{N} \leq \frac{f_0}{N}l_0 + \sum_{i=1}^{n}\frac{f_i}{N}l_i \leq 1 - \sum_{i=0}^{n}\frac{f_i}{N}\log_2 \frac{f_i}{N}$$

This yields the following pair of inequalities.


\begin{align*}
\frac{f_0'}{N} l_0' + \sum_{i=1}^{n}\frac{f_i}{N}l_i' &\geq -\frac{f_0'}{N'} \log_2\frac{f_i'}{N'} - \sum_{i=1}^{n} \frac{f_i}{N'} \log_2 \frac{f_i}{N'}\\
\frac{f_0'}{N} l_0' + \sum_{i=1}^{n}\frac{f_i}{N}l_i' &\leq 1 - \frac{f_0'}{N'} \log_2 \frac{f_0'}{N'} - \sum_{i=1}^{n} \frac{f_i}{N'} \log_2 \frac{f_i}{N'}
\end{align*}

Note that $N' = N - f_0 + f_0'$ because the two Huffman codes are constructed on different sets: one on the original data and the second on the filtered data.

This allows us to bound the following expression:
$$Z = f_0 l_0 + \sum_{i=1}^{n} f_i l_i - f_0' l_0' - \sum_{i=1}^{n} f_i l_i'$$
$$ \leq N - f_0 \log_2 \frac{f_0}{N} + f_0' \log_2 \frac{f_0'}{N'} - \sum_{i=1}^{n}f_i\left(\log_2 \frac{f_i}{N} - \log_2 \frac{f_i}{N'}\right)$$
$$ = N - f_0 \log_2 \frac{f_0}{N} + f_0' \log_2 \frac{f_0'}{N'} - \sum_{i=1}^{n}f_i\left(\log_2 N' - \log_2 N\right)$$
$$ = N - f_0 \log_2 \frac{f_0}{N} + f_0' \log_2 \frac{f_0'}{N'} - \log_2\left(\frac{N'}{N}\right)\sum_{i=1}^{n}f_i$$

For the $f_0' \log_2 f_0' / N'$ term, note that
\begin{align*}
f_0' \log_2 \frac{f_0'}{N'} &= f_0' \log_2 \left(\frac{f_0'}{N-f_0 + f_0'}\right) = f_0' \log_2 \left(\frac{1}{\frac{N-f_0}{f_0'} + 1}\right)\\ &= -f_0' \log_2\left(1 + \frac{N-f_0}{f_0'}\right)
\end{align*}

Recall the inequality
$-\ln(1 + z) \leq -\frac{2z}{2+z}$ for $z \geq 0$~\cite{topsoe2004some}.
Applying this inequality to our expression with $z = (N-f_0)/f_0'$, we obtain

\begin{align*}
-f_0'\log_2\left(1 + \frac{N-f_0}{f_0'}\right) &\leq -2f_0'\frac{N-f_0}{f_0'\left(2 + \frac{N-f_0}{f_0'}\right)} = -2f_0'\frac{N-f_0}{2f_0' + N-f_0}\\ &\leq -2f_0'\frac{N-f_0}{3N - f_0}
\end{align*}
where the second inequality is because $f_0' \leq N$. This leaves us with the overall inequality

$$Z \leq N - f_0 \log_2 \frac{f_0}{N} - 2f_0' \frac{N-f_0}{3N-f_0} - \log_2\left(\frac{N'}{N}\right) \sum_{i=1}^{n}f_i$$

Now we address the $-\log_2\left(\frac{N'}{N}\right) \sum_{i=1}^{n}f_i$ term, observing that the function
$$-\log_2\left(\frac{N'}{N}\right) = -\log_2\left(\frac{N-f_0 + f_0'}{N}\right) \leq -\log_2\left(\frac{N-f_0}{N}\right)$$
because it is a monotonically decreasing function of $f_0'$ that attains a maximum at $f_0' = 0$. Finally, note that
$\sum_{i=1}^{n}f_i = N - f_0$ by the definition of $N$,
yielding the following overall bounds.

$$Z \leq N - f_0 \log_2 \frac{f_0}{N} - 2f_0' \frac{N-f_0}{3N-f_0} - \log_2\left(\frac{N-f_0}{N}\right) \left(N-f_0\right)$$

We may reorganize this expression into the following form by reordering terms and using the fact that $\alpha = f_0 / N$.

$$Z \leq N - f_0 \log_2 \alpha - 2f_0' \frac{1-\alpha}{3-\alpha} - \log_2\left(1-\alpha\right) \left(N-f_0\right)$$

By combining this bound with the previous bound on $D_e - D_{e'}$ we have the following bound on $\Delta = C_{\mathrm{CSF}} - C_{\mathrm{FSF}}$

\begin{align*}
\Delta &= \delta \left(N - f_0 \log_2 \alpha - 2f_0' \frac{1-\alpha}{3-\alpha} - (N-f_0)\log_2 (1-\alpha)\right) \\
& + n - b(\epsilon) (N - f_0) 
\end{align*}

Taking the expectation and dividing by $N$, we obtain
\begin{align*}
\mathbb{E}[\Delta] / N &\leq \delta \left(1 - \alpha \log_2 \alpha - 2\alpha \epsilon \frac{1-\alpha}{3-\alpha} - (1-\alpha)\log_2 (1-\alpha)\right) \\
& + n/N - b(\epsilon) (N - f_0)
\end{align*}
Observe that $-\alpha \log_2 \alpha - (1-\alpha) \log_2(1-\alpha)$ is the entropy of a Bernoulli trial with success rate $\alpha$, which is trivially bounded above by 1 bit. This gives rise to the expression
$$\mathbb{E}[\Delta] / N  \leq 2\delta - 2\alpha  \delta \epsilon \frac{1-\alpha}{3-\alpha} + n/N 
- b(\epsilon) (1 - \alpha)$$
which can be further simplified to
$$\mathbb{E}[\Delta] / N  \leq 2\delta - \frac{1}{2}\alpha \delta \epsilon + n/N 
- b(\epsilon) (1 - \alpha)$$
\end{proof}

Putting the two bounds together,

\begin{align*}
\mathbb{E}[C_{\mathrm{CSF}}- C_{\mathrm{FSF}}] / N &\geq \alpha \delta (1-\epsilon) - (1-\alpha) b(\epsilon) - n/N \\
\mathbb{E}[C_{\mathrm{CSF}}- C_{\mathrm{FSF}}] / N &\leq 2\delta - \frac{1}{2}\alpha \epsilon
- b(\epsilon) (1 - \alpha) + n/N 
\end{align*}

The gap is
$ 2\delta + \frac{1}{2}\alpha \delta \epsilon - \alpha \delta$
which is greater than 0.

\section{Theory Validation}

\begin{figure*}[!t]
\begin{center}
\includegraphics[width=\textwidth]{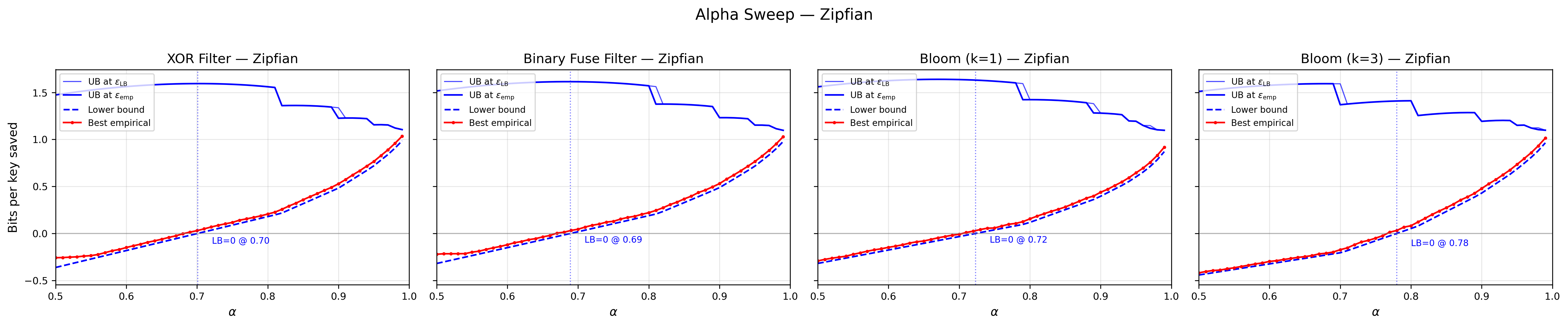}
\end{center}

\begin{center}
\includegraphics[width=\textwidth]{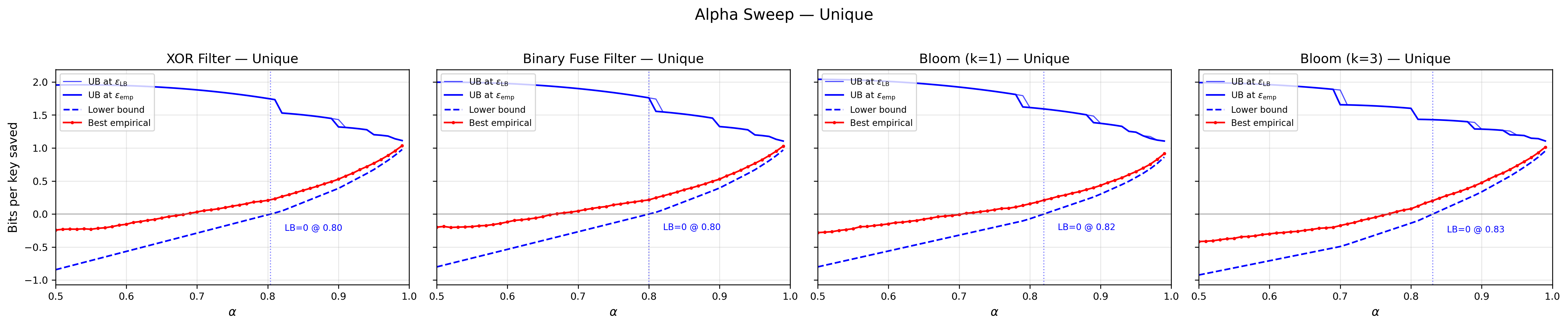}
\end{center}

\begin{center}
\includegraphics[width=\textwidth]{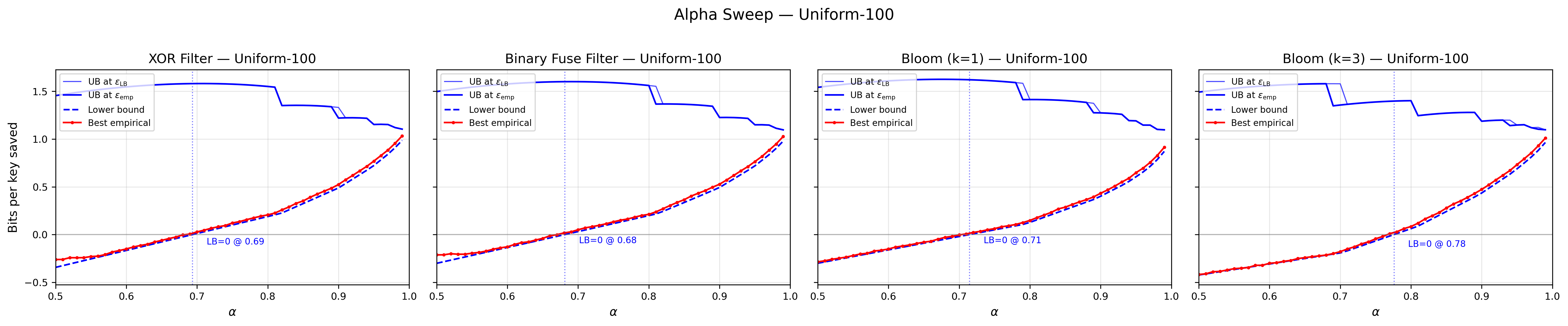}
\end{center}

\caption{Results of sweeping $\alpha$ for each distribution across all four filter types. Each panel plots the lower bound (blue dashed), best empirical savings (red dashed), and the upper bound at the theory-guided and empirical-best parameters (light and dark blue solid). The vertical dashed line marks where the lower bound crosses zero.}
\label{fig:alpha_sweep}
\end{figure*}

The AutoCSF algorithm is based on three theoretical claims:
\begin{enumerate}
    \item The space savings from filter augmentation are bounded between our lower and upper bounds (Theorems 5.2 and 5.4).
    \item The sign of the lower bound provides a reliable decision criterion for when filtering is beneficial.
    \item Maximizing the lower bound over the filter's parameter space yields the parameter with the largest provable space savings.
\end{enumerate}
In this section, we evaluate each of these claims empirically. Our goal is to determine whether the bounds are tight enough to be practically useful, whether the decision criterion produces false positives or false negatives, and how much space the theory-guided parameter sacrifices, if any, relative to an exhaustive search over all discrete parameter settings.

We construct synthetic key-value datasets that sweep the majority fraction $\alpha$ from 0.50 to 0.99 and evaluate three minority value distributions that span a range of minority entropy.
\begin{itemize}
    \item \textbf{Unique}: each minority key has a distinct value (worst case for the bounds).
    \item \textbf{Zipfian}: power-law with exponent $s = 1.5$.
    \item \textbf{Uniform-100}: values drawn uniformly from 100 symbols.
\end{itemize}
We evaluated four types of filters: XOR filter, binary fuse filter, and Bloom filters with $k \in \{1, 3\}$ hash functions.

Figure~\ref{fig:alpha_sweep} shows the results of sweeping $\alpha$ for each distribution across all four filter types. Each panel plots four curves: the lower bound (blue dashed), best empirical savings (red dashed), and the upper bound evaluated at the theory-guided parameter (light blue solid) and at the empirical-best parameter (dark blue solid). A vertical dashed line marks where the lower bound crosses zero.

\begin{figure*}[p]
\makeatletter\setlength{\@fptop}{0pt}\makeatother
\vspace*{-2cm} 
\centering
\includegraphics[width=0.8\textwidth]{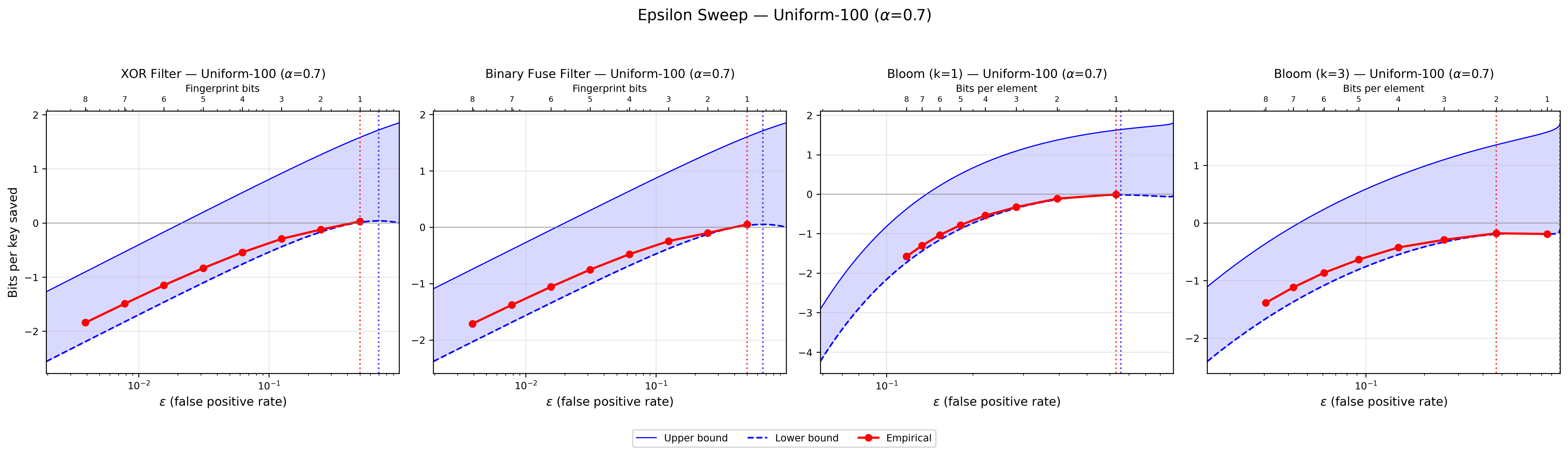}\\[-2pt]
\includegraphics[width=0.8\textwidth]{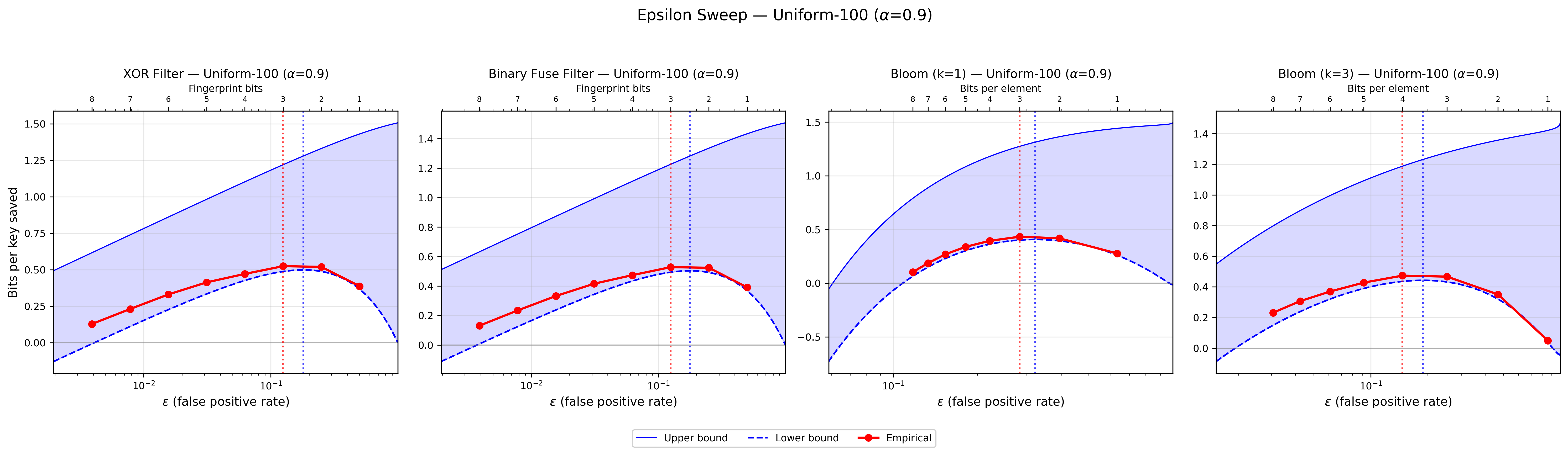}\\[4pt]
\includegraphics[width=0.8\textwidth]{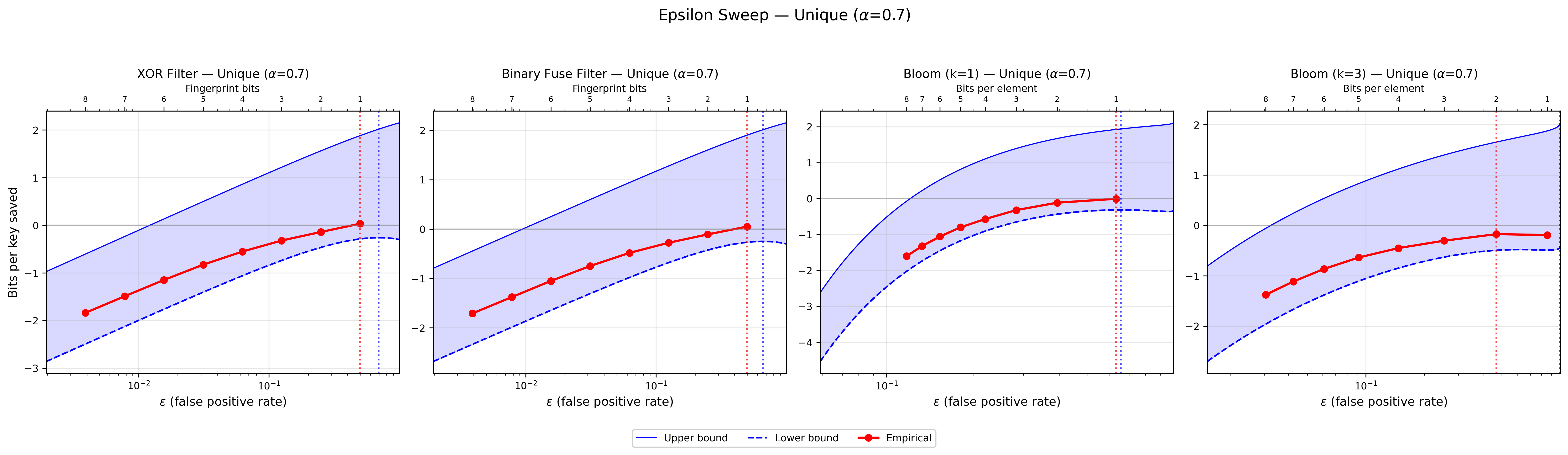}\\[-2pt]
\includegraphics[width=0.8\textwidth]{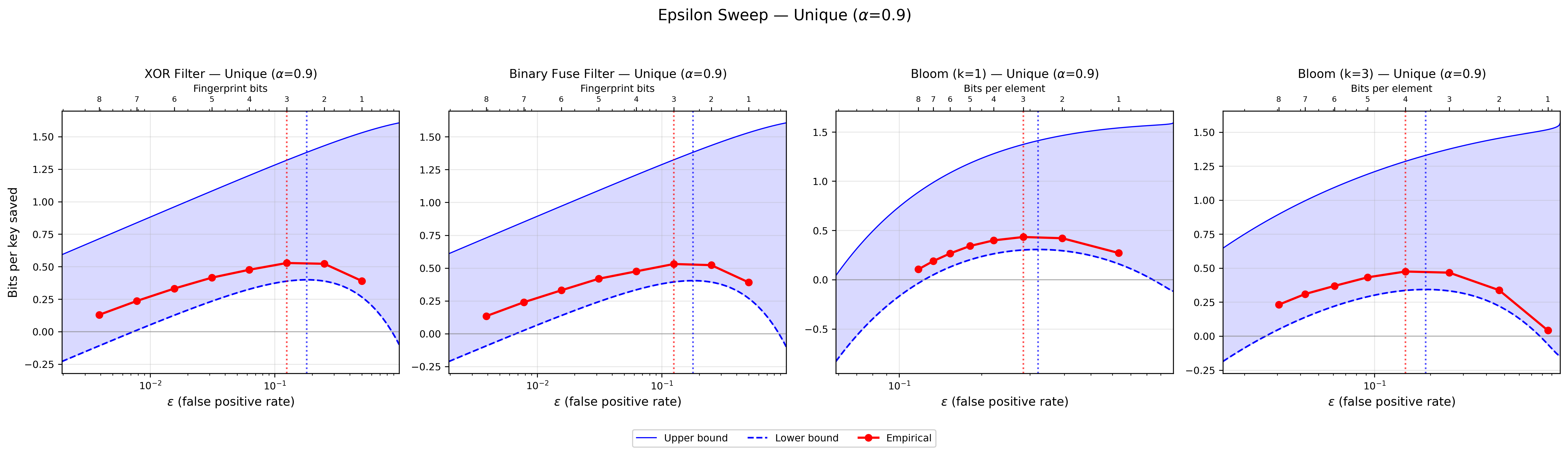}\\[4pt]
\includegraphics[width=0.8\textwidth]{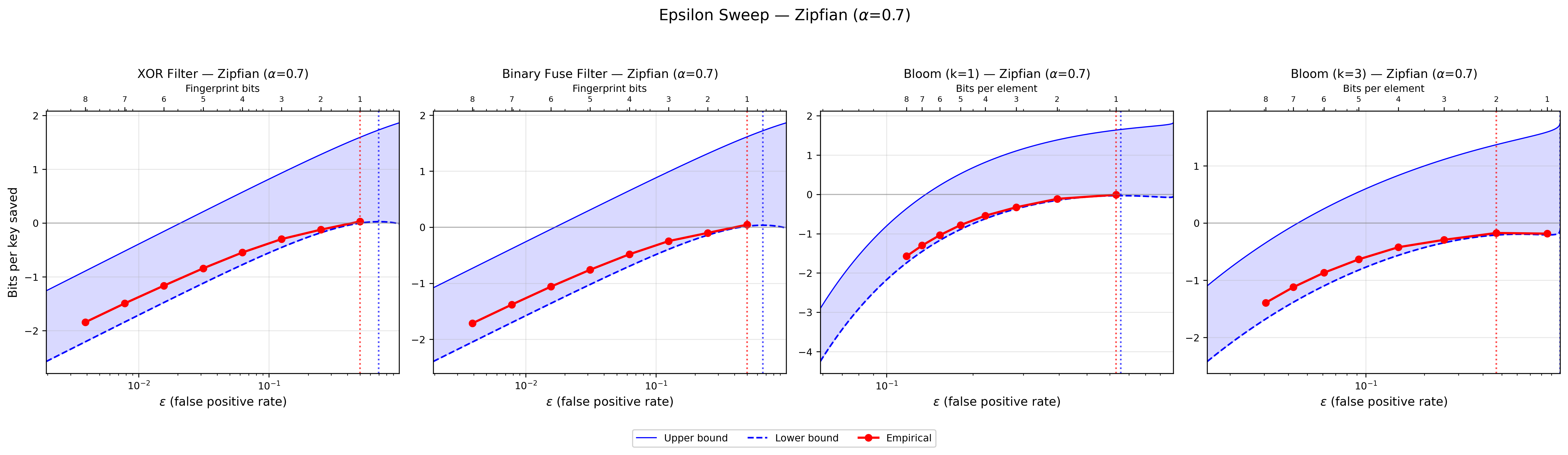}\\[-2pt]
\includegraphics[width=0.8\textwidth]{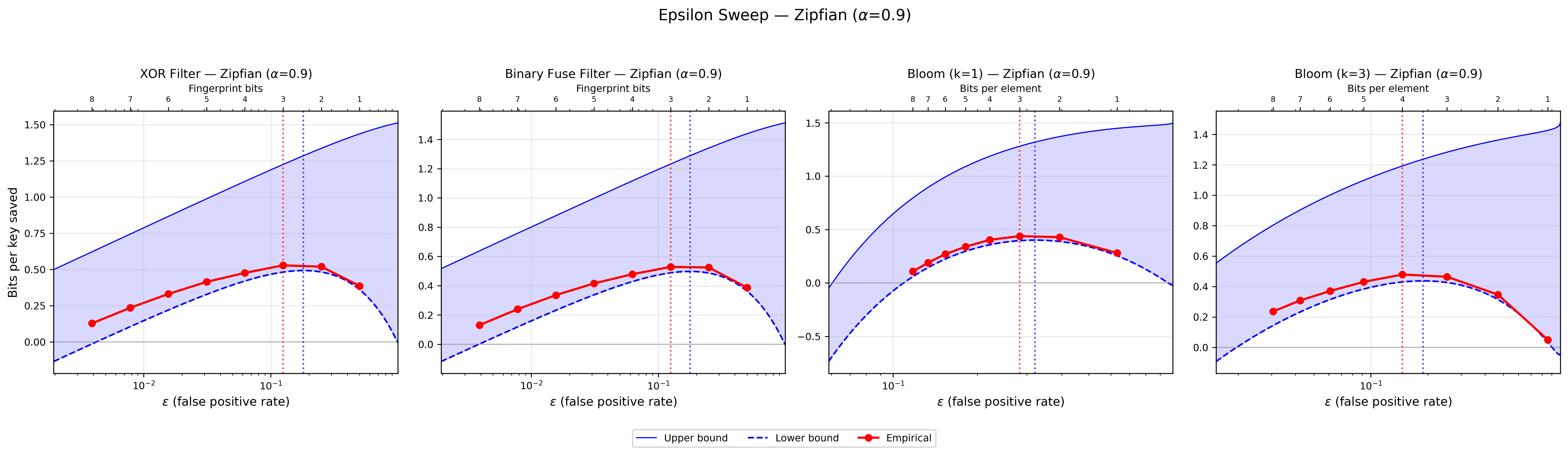}
\caption{Results of sweeping $\varepsilon$ for each distribution across all four filter types.}
\label{fig:epsilon_sweep}
\end{figure*}

\subsection{Bound Tightness}

Figure~\ref{fig:alpha_sweep} shows four curves per panel. The \emph{lower bound} (blue dashed) is Theorem~5.2 evaluated at the discrete parameter $\varepsilon_{\mathrm{LB}}$ that maximizes the lower bound. The \emph{best empirical} curve (red dashed) is the measured bits/key saved at the discrete parameter $\varepsilon_{\mathrm{emp}}$ that yields the largest savings, found by exhaustive search over discrete filter parameters. The \emph{UB at $\varepsilon_{\mathrm{LB}}$} curve (light blue solid) is the upper bound (Theorem~5.4) evaluated at the theory-guided parameter, and the \emph{UB at $\varepsilon_{\mathrm{emp}}$} curve (dark blue solid) is the upper bound at the empirical-best parameter. A vertical dashed line marks the $\alpha$ value where the lower bound crosses zero, indicating the decision boundary for filter augmentation.

We observe that across all filter-distribution combinations and all 50 values of $\alpha$, the empirical savings consistently fall between the lower bound and the upper bound at $\varepsilon_{\mathrm{emp}}$. We note that this is not a formal proof of correctness. The bounds assume a continuous $\varepsilon$ while the experiments operate over discrete filter parameters, and the measured space includes implementation overheads (metadata, alignment) not captured by the theory. Nonetheless, empirical values stay within the predicted range across all configurations, which provides strong evidence that the bounds are practically valid.

The upper bound is looser, with a gap of up to $\sim$1.5 bits/key at moderate $\alpha$ for the unique distribution, and a smaller gap for zipfian and uniform-100. This looseness does not affect AutoCSF's practical decisions, which are driven entirely by the lower bound. The value of the upper bound is primarily theoretical in establishing, to our knowledge, the first such non-trivial guarantee on the total space savings. This upper bound establishes the current best-known result. A promising direction for future work might be to tighten this bound further, perhaps by making more fine-grained assumptions on the value distribution. 

\subsection{Validating the Decision Criterion}

To further validate AutoCSF, we sweep the false positive rate $\varepsilon$ across all discrete filter configurations for each distribution at two values of $\alpha$: $\alpha = 0.7$, where filtering is marginal, and $\alpha = 0.9$, where filtering provides clear benefits. Each panel in Figure~\ref{fig:epsilon_sweep} sweeps $\varepsilon$ on a log scale and plots the continuous lower and upper bounds (shaded region). The empirical bits/key saved are overlaid at the discrete $\varepsilon$ values corresponding to each filter parameter (e.g., fingerprint bits or bits per element), and the theory-optimal $\varepsilon^*$ is marked as the continuous maximum of the lower bound.

\textbf{Decision criterion.} The lower bound crossing zero provides a decision boundary: when $\text{LB} > 0$ at the theory-optimal $\varepsilon^*$, AutoCSF recommends filtering; otherwise, it defaults to a plain CSF. Across all filter types, distributions, and $\alpha$ values in the alpha sweep, we find that whenever the lower bound recommends filtering, the empirical savings are positive, i.e. the criterion never recommends a filter that increases space. The criterion is conservative, however, for the unique distribution: there is a range of $\alpha$ values (roughly $0.71$--$0.80$ for XOR) where filtering produces small empirical savings (${\sim}0.05$--$0.2$ bits/key) but the lower bound remains negative. This gap is driven by the $n/N$ term, which is large for the unique distribution where $n \approx N(1-\alpha)$. For zipfian and uniform-100, where $n/N \approx 0$, the lower bound crossing aligns closely with the point at which filtering first becomes empirically helpful.

\textbf{Parameter selection.} In practice, each filter type admits only a small set of discrete configurations (e.g., fingerprint\_bits $\in \{1, \ldots, 8\}$ for XOR filters, or all $(k, \text{bpe})$ pairs for Bloom filters), each corresponding to a specific $\varepsilon$. AutoCSF evaluates the lower bound at each available $\varepsilon$ and selects the configuration whose lower bound is largest. This search is inexpensive since the parameter space is small, and it avoids any mismatch between a continuous optimum and the discrete $\varepsilon$ values that filters can actually realize. To evaluate how well the lower bound serves as a proxy for actual savings, we compare the parameter selected by the lower bound against the empirically optimal parameter found by exhaustive search. The two agree in the large majority of cases, and when they disagree it is always by a single discrete step, with the theory-guided parameter saving at most $0.02$ bits/key less than the empirical optimum.

\subsection{Comparison with Prior Work}

The closest prior work to AutoCSF is the Bloom-enhanced CSF (BCSF) of Shibuya et al.~\cite{shibuya2022space}. Like AutoCSF, their method augments a CSF with a Bloom filter to handle the majority value. Their approach differs from ours in three respects:
\begin{itemize}
\item \textbf{CSF cost model.} They derive the optimal false positive rate $\varepsilon^*$ using a heuristic cost model $C_{\mathrm{CSF}}(H_0)$ fit empirically to observed CSF sizes, rather than from provable bounds.
\item \textbf{Bloom filter cost model.} Their Bloom filter cost model uses the idealized constant $C_{\mathrm{BF}} = \log_2 e \approx 1.44$ bits per key, which assumes a continuous (real-valued) number of hash functions, rather than a parameterized model that accounts for the discrete choice of hash count $k$ and bits per element.
\item \textbf{Decision criterion.} Their method recommends no filter when $\varepsilon^* \geq 1$, but this criterion is rarely triggered in practice: the heuristic $C_{\mathrm{CSF}}(H_0)$ produces values large enough that the method recommends filtering even at $\alpha = 0.50$, where filtering does not reduce space.
\end{itemize}

To compare the two approaches, we restrict AutoCSF to Bloom filters only (selecting the best $(k, \text{bpe})$ pair via the lower bound) and measure the resulting bits per key across all three distributions. Figure~\ref{fig:shibuya_bpk} shows the measured bits per key for each method alongside a no-filter baseline.

The comparison reveals a clear difference in the \emph{decision} to filter. At low $\alpha$, the heuristic recommends filter configurations that \emph{increase} space usage relative to a plain CSF, with the heuristic curve exceeding the no-filter baseline by up to 1 bit/key for the unique distribution at $\alpha = 0.50$. This gap persists across all three distributions up to $\alpha \approx 0.7$--$0.8$ depending on the minority entropy. AutoCSF avoids this by defaulting to a plain CSF whenever the lower bound is negative. When filtering \emph{is} beneficial (roughly $\alpha \gtrsim 0.8$), both methods achieve comparable space usage, with AutoCSF matching or improving upon the heuristic. These results confirm that AutoCSF improves upon the heuristic both in the decision of when to apply a filter and in the selection of filter parameters.

We restrict both methods to Bloom filters in this comparison for fairness, since the BCSF framework only supports Bloom filters. However, as shown in the alpha sweep experiments, AutoCSF can also leverage XOR filters, binary fuse filters, and other set membership structures, which consistently yield further space savings.

\begin{figure*}[!t]
\includegraphics[width=0.8\textwidth]{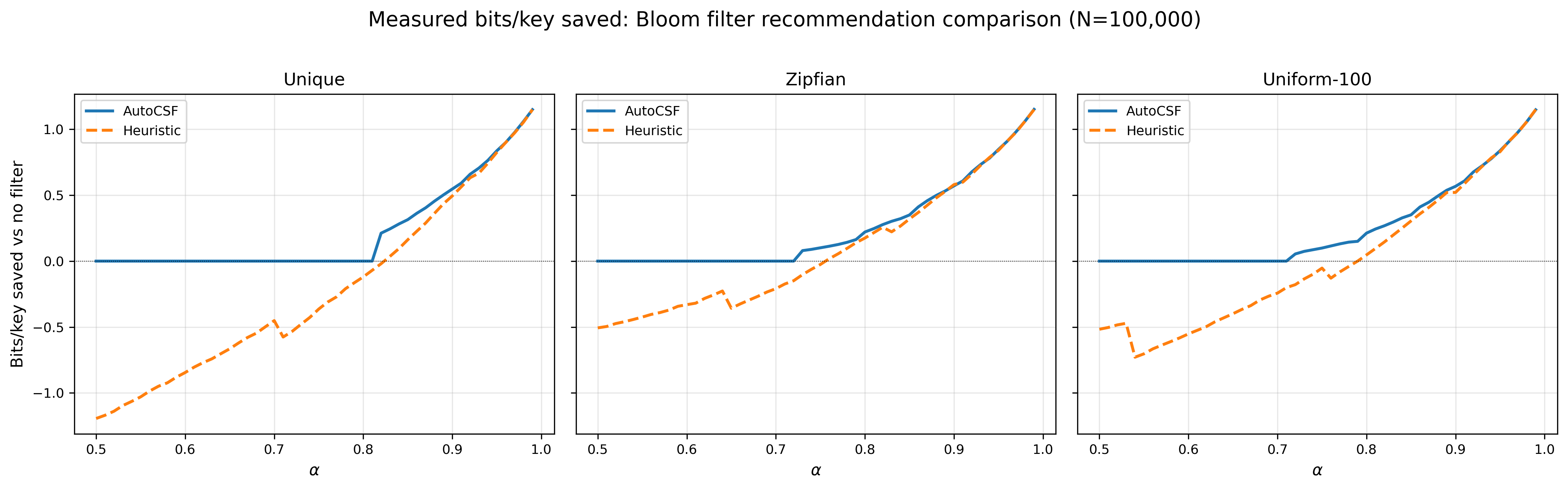}
\caption{Results of comparing heuristic and AutoCSF cost models}
\label{fig:shibuya_bpk}
\end{figure*}

\section{Experiments}

The previous section validates AutoCSF's theoretical bounds and decision criterion in isolation. We now evaluate AutoCSF as a complete, end-to-end index and compare it against four baselines that span the landscape of practical approaches to static key-value indexing. Our goal is to answer three questions: (1)~Does AutoCSF's theory-guided filter selection translate into real space savings over competing methods? (2)~What are the latency and construction costs, if any, of filter-augmented CSFs? (3)~How do these trade-offs change on real-world workloads where the value distributions are not synthetically controlled?

\subsection{Baselines}

We compare AutoCSF against the following methods:

\begin{itemize}
\item \textbf{BCSF (Shibuya)}~\cite{shibuya2022space}. The Bloom-enhanced CSF described in Section~3. We use \citeauthor{shibuya2022space}'s heuristic cost model to set the Bloom filter parameters. This baseline isolates the effect of AutoCSF's improved parameter selection: both methods use a CSF with a pre-filter, but differ in how the filter parameters are chosen.

\item \textbf{C++ Hash Table}. A standard \texttt{std::unordered\_map} storing all key-value pairs. This baseline represents the naive approach with no compression, providing a reference point for memory usage and query latency.

\item \textbf{MPH Table}. A minimal perfect hash function (via Sux4J's GOV construction~\cite{genuzio2020fast}) paired with a compact value array. This is a distribution-agnostic baseline as it does not exploit value skew and its memory usage depends only on the vocabulary size.

\item \textbf{Learned CSF}~\cite{hermann2025learned}. A learned static function that trains a small neural network to predict value distributions from keys. These predictions are used to derive key-specific prefix codes, which are stored in a pair of ribbon-based retrieval structures. By learning correlations between keys and values, this method can break the zero-order entropy barrier of standard CSFs, but at the cost of higher query latency and construction time due to model training.
\end{itemize}

\subsection{Setup}

\textbf{Synthetic datasets.} We generate key-value datasets with $N = 100{,}000$ keys, sweeping the majority fraction $\alpha$ from 0.50 to 0.99. The majority value is assigned to an $\alpha$-fraction of keys uniformly at random; the remaining $(1-\alpha)N$ minority keys draw values from one of three distributions: \emph{Uniform-100} (100 equally likely symbols), \emph{Zipfian} (power-law with exponent $s = 1.5$), and \emph{Unique} (every minority key has a distinct value). These distributions span a wide range of minority entropy and stress-test different aspects of each method's compression strategy.

\textbf{Genomics datasets.} We evaluate on the three real $k$-mer count datasets used in \cite{shibuya2022space} derived from whole-genome sequencing data, using $k = 15$:
\begin{itemize}
\item \emph{E.\ coli} (Sakai strain): $n = 5.3$M keys, $\alpha = 0.97$, 42 distinct values.
\item \emph{SRR10211353}: $n = 9.8$M keys, $\alpha = 0.20$, 228 distinct values.
\item \emph{C.\ elegans}: $n = 69.7$M keys, $\alpha = 0.82$, 760 distinct values.
\end{itemize}
These datasets exhibit the diversity of real genomics workloads: E.\ coli has extreme skew ($\alpha = 0.97$) with very few distinct values, SRR has low skew ($\alpha = 0.20$) with moderate vocabulary, and C.\ elegans is a large-scale dataset (70M keys) with intermediate skew.

\textbf{Metrics.} We report three metrics: \emph{memory} in bits per key (bpk), \emph{query latency} as the mean inference time per key in nanoseconds, and \emph{construction time} in seconds. For AutoCSF and BCSF, we use binary fuse filters~\cite{graf2020xor}, which offer lower false positive rates per bit than Bloom filters.

\textbf{Learned CSF configuration.} We follow the training pipeline of \citet{hermann2025learned}, sweeping four MLP architectures (linear, one hidden layer with 50 or 100 units, and two hidden layers with 50 units each) with float16 quantization and early stopping. For genomics datasets, we use ordinal $k$-mer encoding as input features, mapping each nucleotide position to $\{0,1,2,3\}/3$, since the character-level structure of $k$-mers is learnable unlike the hash-derived features used for synthetic data. The best model is selected by minimum total size (ribbon storage plus model weights). Construction time includes both model training and index construction.

\subsection{Synthetic Benchmarks}

\textbf{Pareto frontier.} Figure~\ref{fig:pareto} plots memory (bpk) against query latency (ns) on log-log axes for all five methods across the three synthetic distributions plus genomics data. Each method traces a trajectory as $\alpha$ sweeps from 0.5 to 0.99; larger markers highlight $\alpha \in \{0.5, 0.8, 0.95\}$.

\begin{figure*}[t]
\centering
\includegraphics[width=\textwidth]{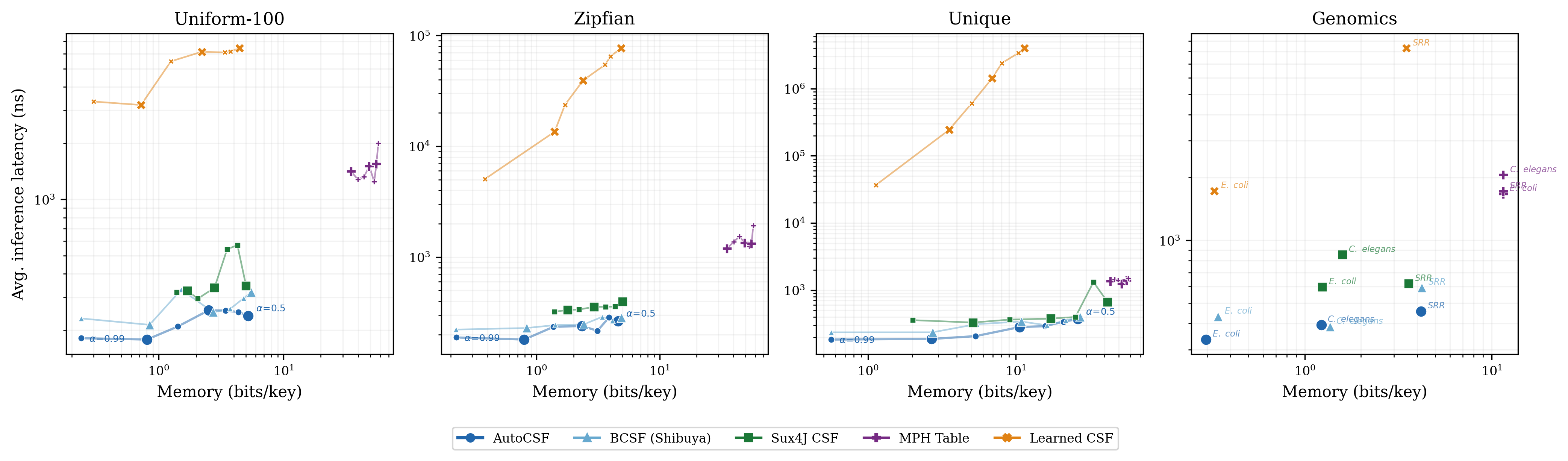}
\caption{Memory--latency Pareto frontier across synthetic distributions and genomics datasets ($N = 100{,}000$ for synthetic, real sizes for genomics). Each trajectory sweeps $\alpha$ from 0.5 (right, more memory) to 0.99 (left, less memory). Log-log scale.}
\label{fig:pareto}
\end{figure*}

Several patterns emerge from the Pareto plot. AutoCSF and BCSF consistently occupy the lower-left corner of the Pareto frontier---low memory \emph{and} low latency---across all three distributions. The hash table offers competitive latency (107--204~ns) but does not compress the data. MPH Table uses 7--70$\times$ more memory than AutoCSF and its latency is 10--20$\times$ higher. Learned CSF achieves tighter memory than AutoCSF on Uniform-100 at all $\alpha$ values (e.g., 4.4 vs.\ 5.2~bpk at $\alpha = 0.5$) and on Unique at low-to-moderate $\alpha$ (e.g., 11.4 vs.\ 26.3~bpk at $\alpha = 0.5$), but at 2--4 orders of magnitude higher query latency. On Zipfian, AutoCSF achieves lower memory than Learned CSF at all $\alpha$ values. The log-log scale in Figure~\ref{fig:pareto} is necessary to display these extremes on the same axes.

The Unique distribution is the hardest case for all methods: at $\alpha = 0.5$, half the keys have distinct values, so the CSF must store nearly $\log_2(N/2) \approx 16$ bits of entropy per minority key. Learned CSF achieves 11.4~bpk versus AutoCSF's 26.3~bpk, demonstrating that a trained model can significantly reduce effective entropy when the value space is large. However, this comes at extreme cost: 4.0M~ns query latency ($34{,}000\times$ slower than AutoCSF's 116~ns) and 983 seconds to construct ($15{,}000\times$ slower). At higher skew ($\alpha = 0.95$), AutoCSF overtakes Learned CSF in memory (2.7 vs.\ 3.5~bpk) while maintaining $3{,}000\times$ faster queries.

\textbf{Construction time.} Table~\ref{tab:synthetic} reports memory, latency, and construction time for all five methods at $\alpha \in \{0.5, 0.8, 0.95\}$. AutoCSF and BCSF build in under 0.1 seconds across all configurations. C++ Hash Table is the fastest to construct (${\sim}0.006$~s). MPH Table builds in 0.4--0.6 seconds. Learned CSF is the most expensive to build: 5--983 seconds depending on distribution complexity, with the Unique distribution requiring over 16 minutes at $\alpha = 0.5$. At $\alpha = 0.95$ on Uniform-100, where Learned CSF's memory advantage over AutoCSF is only 0.1~bpk, it takes $440\times$ longer to construct (5.7~s vs.\ 0.013~s).

\begin{table*}[t]
\centering
\begin{tabular}{lrrrrrrrrr}
\toprule
 & \multicolumn{3}{c}{$\alpha\!=\!0.5$} & \multicolumn{3}{c}{$\alpha\!=\!0.8$} & \multicolumn{3}{c}{$\alpha\!=\!0.95$} \\
Method & bpk & query (ns) & build (s) & bpk & query (ns) & build (s) & bpk & query (ns) & build (s) \\
\midrule
\multicolumn{10}{c}{\emph{Uniform-100}} \\
\midrule
AutoCSF & 5.2 & \textbf{92} & 0.030 & 2.5 & \textbf{116} & 0.030 & 0.8 & \textbf{78} & 0.013 \\
BCSF (Shibuya) & 5.5 & 95 & 0.028 & 2.7 & 120 & 0.020 & 0.8 & 87 & 0.008 \\
C++ Hash Table & 95.1 & 150 & \textbf{0.006} & 95.1 & 121 & \textbf{0.006} & 95.1 & 157 & \textbf{0.007} \\
MPH Table & 35.0 & 1412 & 0.569 & 49.0 & 1505 & 0.438 & 55.9 & 1551 & 0.572 \\
Learned CSF & \textbf{4.4} & 6435 & 5.173 & \textbf{2.2} & 6160 & 5.480 & \textbf{0.7} & 3199 & 5.733 \\
\midrule
\multicolumn{10}{c}{\emph{Zipfian}} \\
\midrule
AutoCSF & \textbf{4.6} & 98 & 0.025 & \textbf{2.3} & 95 & 0.014 & \textbf{0.8} & \textbf{78} & 0.007 \\
BCSF (Shibuya) & 4.9 & \textbf{93} & 0.026 & 2.4 & \textbf{90} & 0.013 & 0.8 & 94 & 0.008 \\
C++ Hash Table & 95.1 & 111 & \textbf{0.006} & 95.1 & 142 & \textbf{0.006} & 95.1 & 107 & \textbf{0.006} \\
MPH Table & 35.4 & 1195 & 0.569 & 49.2 & 1344 & 0.490 & 56.0 & 1323 & 0.470 \\
Learned CSF & 4.9 & 76811 & 32.064 & 2.4 & 39387 & 22.474 & 1.4 & 13530 & 13.284 \\
\midrule
\multicolumn{10}{c}{\emph{Unique}} \\
\midrule
AutoCSF & 26.3 & 116 & 0.067 & 10.6 & 103 & 0.032 & \textbf{2.7} & \textbf{79} & 0.012 \\
BCSF (Shibuya) & 27.3 & \textbf{101} & 0.057 & 11.0 & \textbf{93} & 0.026 & 2.7 & 88 & 0.012 \\
C++ Hash Table & 95.1 & 114 & \textbf{0.007} & 95.1 & 204 & \textbf{0.006} & 95.1 & 171 & \textbf{0.006} \\
MPH Table & 43.9 & 1363 & 0.496 & 52.2 & 1236 & 0.487 & 56.4 & 1390 & 0.472 \\
Learned CSF & \textbf{11.4} & 3997545 & 983.000 & \textbf{6.9} & 1431918 & 359.657 & 3.5 & 243241 & 85.772 \\
\bottomrule
\end{tabular}
\caption{Synthetic benchmark results ($N = 100{,}000$). Memory in bits/key (bpk), query latency in nanoseconds (ns), and construction time in seconds. \textbf{Bold} indicates best in each column.}
\label{tab:synthetic}
\end{table*}

\subsection{Genomics Benchmarks}

Table~\ref{tab:genomics} reports results on the three genomics datasets. The datasets span a wide range of skew: E.\ coli is highly skewed ($\alpha = 0.97$), C.\ elegans has moderate skew ($\alpha = 0.82$), and SRR has low skew ($\alpha = 0.20$). This diversity tests whether AutoCSF's decision criterion correctly adapts its strategy to the data.
\begin{table*}[t]
\centering
\begin{tabular}{lrrrrrrrrr}
\toprule
 & \multicolumn{3}{c}{E.\ coli ($N$=5.3M, $\alpha$=0.97)} & \multicolumn{3}{c}{SRR ($N$=9.8M, $\alpha$=0.20)} & \multicolumn{3}{c}{C.\ elegans ($N$=69.7M, $\alpha$=0.82)} \\
Method & bpk & query (ns) & build (s) & bpk & query (ns) & build (s) & bpk & query (ns) & build (s) \\
\midrule
AutoCSF & \textbf{0.31} & \textbf{92} & \textbf{0.2} & 4.21 & 443 & 2.3 & \textbf{1.26} & 466 & \textbf{6.6} \\
BCSF (Shibuya) & 0.35 & 126 & 0.3 & 4.24 & \textbf{355} & 2.3 & 1.39 & \textbf{274} & 7.0 \\
C++ Hash Table & 152.00 & 488 & 0.8 & 152.00 & 742 & \textbf{1.5} & 152.00 & 1034 & 15.2 \\
MPH Table & 11.55 & 1666 & 8.1 & 11.55 & 1716 & 13.6 & 11.55 & 2054 & 136.3 \\
Learned CSF & 0.32 & 7843 & 161.0 & \textbf{3.49} & 31811 & 676.9 & 1.68 & 29558 & 4573.5 \\
\bottomrule
\end{tabular}
\caption{Genomics benchmark results. Memory in bits/key (bpk), query latency in nanoseconds (ns), and construction time in seconds. \textbf{Bold} indicates best in each column.}
\label{tab:genomics}
\end{table*}

On E.\ coli, where $\alpha = 0.97$, AutoCSF achieves 0.31~bpk, less than one-third of a bit per key to index 5.3 million $k$-mers. This is $37\times$ smaller than MPH Table (11.55~bpk). Learned CSF is competitive in memory (0.32~bpk) but requires 161 seconds to build versus AutoCSF's 0.2 seconds ($805\times$ slower) and has $85\times$ higher query latency (7{,}843~ns vs.\ 92~ns).

The SRR dataset ($\alpha = 0.20$) represents a case where filter augmentation is not beneficial. AutoCSF correctly defaults to a plain CSF, achieving 4.21~bpk with 443~ns query latency. Learned CSF achieves the best memory (3.49~bpk) but at a cost: 677 seconds to build ($294\times$ slower than AutoCSF) and $72\times$ higher query latency (31{,}811~ns vs.\ 443~ns).

On C.\ elegans ($n = 69.7$M, $\alpha = 0.82$), AutoCSF scales to the largest dataset while maintaining its advantages: 1.26~bpk with 466~ns latency and a 6.6-second build time. For context, this means the entire 70-million-entry $k$-mer count table is indexed in approximately 11~MB of memory. MPH Table would require 100.6~MB for the same data. Learned CSF uses 1.68~bpk (33\% more memory than AutoCSF) with 29{,}558~ns latency ($63\times$ slower) and requires over 76 minutes to build.

\subsection{Discussion}

Across all experiments, a consistent picture emerges. AutoCSF dominates the Pareto frontier of memory versus latency, especially when the majority fraction $\alpha$ is moderate to high ($\gtrsim 0.7$). The margin over heuristic construction is modest (up to 11\% in memory) but systematic, confirming that the theory-guided parameter selection from Section~4 translates to measurable improvements in practice over the heuristic approach.

The comparison with Learned CSF is particularly instructive. Learned CSF can achieve tighter compression than AutoCSF in several regimes: it consistently uses less memory on Uniform-100, and on Unique it achieves substantially lower bits per key at low-to-moderate $\alpha$ (e.g., 11.4 vs.\ 26.3~bpk at $\alpha = 0.5$), demonstrating that a trained model can exploit distributional structure beyond what a static Huffman code captures. However, these memory savings come at 2--4 orders of magnitude higher query latency due to per-key neural network inference, and construction times ranging from seconds to over 16 minutes due to model training. On Zipfian distributions, where the value distribution follows a smoother power law, AutoCSF achieves lower memory than Learned CSF at all $\alpha$ values without any training overhead. These results suggest that learned approaches offer genuine compression advantages in specific regimes, but for practical deployments where query latency and build time are important, AutoCSF provides a more favorable overall tradeoff.

\section{Conclusion}

We presented AutoCSF, a novel, theoretically principled algorithm that determines both \emph{when} to augment a compressed static function with an approximate set membership data structure and \emph{how} to optimally set the parameters for this index to enable fast key-value lookups with a small memory footprint. While the idea of augmenting a CSF with a filter was previously proposed in the literature \cite{shibuya2022space}, the existing state of the art relied on data-driven heuristics, which, as we showed in this paper, can provide misleading guidance in practice. The core innovation in AutoCSF is the development of novel lower and upper bounds on the \emph{difference} in cost between a CSF and filter-augmented CSF, using mathematical tools from information theory. We validate that our theoretical results closely align with empirical observations and provide accurate guidance in parameter selection for practitioners. We also compare AutoCSF to other succinct retrieval systems in the literature, such as BCSF and Learned CSF, on both synthetic and real genomics workloads where we find that AutoCSF provides a Pareto-optimal tradeoff in query latency and memory footprint while maintaining low index build times. 

\section*{Acknowledgements}

We thank Bryce Kille for valuable discussions on early drafts of this work. \\

\noindent This material is based upon work supported by the U.S. National Science Foundation under Grant
No. 2313998. Any opinions, findings, and conclusions or recommendations expressed in this material are those of the author(s) and do not necessarily reflect the views of the U.S. National Science
Foundation.


\bibliographystyle{ACM-Reference-Format}
\bibliography{sample}

\end{document}